\documentclass[aps,physrev,twocolumn,groupedaddress]{revtex4-2}

\usepackage{graphicx}
\usepackage{amsmath,amssymb,amsthm,mathtools}
\usepackage{physics}
\usepackage{hyperref}
\usepackage{cleveref}
\usepackage{color}

\theoremstyle{plain}
\newtheorem{theorem}{Theorem}
\newtheorem*{thmeorem*}{Theorem}

\theoremstyle{plain}
\newtheorem{lemma}{Lemma}
\newtheorem*{lemma*}{Lemma}

\theoremstyle{plain}

\newtheorem*{proposition*}{Proposition}

\theoremstyle{plain}

\newtheorem*{conjecture*}{Conjecture}

\newtheorem*{corollary*}{Corollary}

\theoremstyle{definition}
\newtheorem{definition}{Definition}

\theoremstyle{plain}

\begin{document}


\title{Error-tolerant secure key leasing for quantum decryption keys in public-key encryption}


\author{Duo XU}
\email[]{xu.duo.x3@s.mail.nagoya-u.ac.jp}
\affiliation{Graduate School of Informatics, Nagoya University,
 Furo-cho, Chikusa-ward, Nagoya-City, 464-8601}
\author{Yuki Takeuchi}
\email[]{Takeuchi.Yuki@bk.MitsubishiElectric.co.jp}
\affiliation{Information Technology R\&D Center, Mitsubishi Electric Corporation, 5-1-1 Ofuna, Kamakura, Kanagawa 247-8501, Japan}


\date{\today}

\begin{abstract}
We propose the first error-tolerant secure key leasing (SKL) for public-key encryption{.} As with SKL in prior works, our protocol consists of a lessor {and} lessee. In the protocol, the lessor encodes its secret key into quantum states and leases the key to the lessee. Then, the lessor can ask the lessee to return the secret key at a later point. The lessor is able to check whether the lessee has returned its key honestly. However, our protocol works even when the leased secret key is subject to noise. The lessee decrypts the ciphertext correctly, and the lessor verifies the return of the secret key correctly when the amount of error is below a certain threshold. Our improved protocol does not change the encoding of the secret key, and thus adds no overhead to the quantum information processing{.}
Our most significant result is a framework to analyze the trade-off between robustness against error and security. We bridge the security of the error-tolerant SKL and that of the error-tolerant certified deletion { with} shortened codes, which is a relatively less explored concept in coding theory.
\end{abstract}


\maketitle

\section{introduction}
Secure key leasing (SKL)~\cite{KN22} is a quantum cryptographic primitive that cannot be realized in any classical manner.
This protocol is conducted between two parties Alice (a Lessor) and Bob (a Lessee). 
Suppose that Alice has a secret key to issue its signature.
When it is on vacation, it would like to delegate this task to Bob.
To this end, it has to lease its secret key to Bob, but after the vacation, it must verify that its key is indeed deleted and is not copied by Bob.
If the key is a string of bits, in principle, the cloning cannot be detected.
The SKL overcomes this problem by encoding the secret key in a quantum state~\footnote{There are also SKL protocols in which Bob sends back the secret key to Alice. In this case, Alice checks whether the returned key is altered. For simplicity, we only consider Bob who destroys the key in this paper.}.
This primitive has a wide range of applications in real life, as it is common to borrow something from others and then certainly return it later.
The SKL employs a quantum mechanical property to detect Bob's malicious actions.

The original SKL protocol was proposed in \cite{KN22}.
Since then, it has been improved in several dicrections~\cite{AKN+23,CGJL23,KMY24,PWY+25,KNP25,KNP25b}.
We here review some of them.
In \cite{CGJL23,PWY+25}, it was shown that a classical Lessor is sufficient to realize PKE-SKL (i.e., SKL for public-key encryption (PKE)).
These results eliminate quantum communication between the Lessor and Lessee.
Ref.~\cite{KMY24} developed a framework to design SKL protocols and extended the applicable range of SKL from PKE to pseudorandom functions and digital signatures.
However, the SKL protocol in \cite{KMY24} is vulnerable if a key is simultaneously leased to multiple Lessees, and then they collaborate to copy the leased key by sharing their keys.
This problem was solved in \cite{KNP25,KNP25b}.

In this paper, we give another improvement.
As a common matter, all above protocols are prone to errors in the Lessor's quantum computer and the communication channel, which hinders the SKL from being practical.
Although the errors in the communication channel may be mitigated by replacing a quantum channel with a classical one as in~\cite{CGJL23,PWY+25}, the errors in the Lessor's quantum computer are still problematic.
These errors can be simultaneously corrected by using quantum error-correcting codes (QECCs), but it would be intriguing whether QECCs are necessary or overkill. 
We show that classical error-correcting codes are sufficient for error-tolerant PKE-SKL.
To this end, we modify the deletion verification process and the ciphertext of \cite{KMY24}.
These modifications change only the classical processing, and thus our idea would improve the practicality of SKL without increasing the demand on quantum resources.

As with the original SKL protocol in \cite{KN22}, our protocol also leverages quantum communication to transmit a key from a Lessor to a Lessee.
Since classical communications are sufficient to implement (non error-tolerant) SKL~\cite{CGJL23,PWY+25,TX25,KLYY25}, someone may think that the necessity of quantum communications merely degrades the practicality of our protocol.
This is not the case.
Existing approaches for replacing quantum communications with classical communications require the Lessor to calculate claw-free functions in superposition.
The precise realization of such coherent calculations would be hard for current experiments~\cite{ZKL+23}.
By leveraging quantum communications, we can avoid this process and ease the burden of the Lessor.

There are three key aspects we should consider regarding error-tolerant PKE-SKL.
The first is that it must {\it certify the deletion when errors are present}.
In other words, if the Lessee follows the honest procedure to delete the leased key, the Lessor must judge that the deletion is executed correctly even when the leased key state is exposed to a slight amount of noise.
However, the original verification process in \cite{KMY24} checks whether the certificate generated by the honest lessee is compatible with an ideal leased key state, and hence it may reject certificates generated from slightly corrupted key states.
To overcome this problem, the error-tolerant PKE-SKL must set a threshold on the difference between actual noisy and ideal certificates below which the verification process accepts any noisy certificate.
We concretely calculate this threshold for the concatenated Hamming codes without compromising the security.

The second is that the Lessee must be able to decrypt the ciphertext even when a key state is slightly corrupted by noises.
In \cite{KMY24}, a uniformly random message $m$ is sent for decryption, which is vulnerable to noises because it is not encoded.
We simply encode $m$ using classical error-correcting codes.

The third is the security.
It is necessary for the error-tolerant PKE-SKL to satisfy the first and second aspects without compromising the security.
This simultaneously achievement would be nontrivial in the sense that these three aspects are in trade-off as follows:
\begin{itemize}
    \item The SKL protocol in \cite{KMY24} is secure but not tolerable to errors.
    \item Assume a protocol in which the message is one bit $b \in \{0,1\}$, and it is encoded with the parallel repetition code. This protocol is trivially error tolerant but is absolutely insecure. As explained below, an adversary can decrypt the original message without fail, even after the adversary returns the secret key.
\end{itemize}

Let the key state be $\otimes_{i=1}^n\ket{sk_i}$.
The adversary chooses $i \in [n]$ uniformly at random.
When the Lessor asks the adversary (i.e., a malicious Lessee) to delete the key, the adversary deletes $\ket{sk_j}$ as the honest Lessee does for all $j \neq i$ and generates their correct certificates.
For the $i$th key state $\ket{sk_i}$, the adversary keeps it and randomly generates its certificate.
If the Lessor allows one or more incorrect certificates among the $n$ ones, the adversary can pass the verification process with unit probability.
Even after this verification process, the adversary can decrypt any new ciphertext $ct = (ct_1, ct_2, \dots, ct_n)$.
By using $\ket{sk_i}$, the adversary recovers the $i$th bit of the encoded message from $ct_i$.
Since the message $b$ is encoded with the parallel repetition code, the $i$th bit is $b$ itself.

\subsection{Technical overview}
The key state $\otimes_{i=1}^n\ket{sk_i}$ is encoded randomly in either computational basis or Hadamard basis. When $\ket{sk_i} = \frac{1}{\sqrt 2}(\ket{0,dk_{i,0}} \pm \ket{1,dk_{i,1}})$ where $dk_{i,0}$ and $dk_{i,1}$ are two decryption keys for the public-key encryption, we say $\ket{sk_i}$ is encoded in the Hadamard basis.  When $\ket{sk_i} = \ket{0,dk_{i,0}}$ or $\ket{sk_i} = \ket{1,dk_{i,1}}$, we say $\ket{sk_i}$ is encoded in the computational basis. When encrypting a message $m$, the sender encrypts each bit $m[i]$ with the encryption key $ek_{i,0}$ and $ek_{i,1}$. We denote the resulting ciphertext as $ct_{i,0}$ and $ct_{i,1}$, respectively. The lessee holding $\ket{sk_i}$ runs the classical decryption algorithm with the ciphertext set to $ct_{i,0}$ and $ct_{i,1}$ coherently, then it can certainly obtain the true value of $m[i]$. This is due to the fact that $ct_{i,0}$ and $ct_{i,1}$ are the ciphertexts for the same plaintext. Informally, we say that the information of the decryption key is stored in the computational basis. The honest lessee deletes each $\ket{sk_i}$ by measuring every qubit in the Hadamard basis. We denote the measurement outcome of the first qubit as $e_i$ and the outcome of the rest as $d_i$. The lessor verifies the deletion of $\ket{sk_i}$ in the Hadamard basis as follows. When the lessee measures the qubits except for the first qubit, a Pauli $Z$ error $\sigma_z^{d_i \cdot (dk_{i,0} \oplus dk_{i,1})}$ is applied to the first qubit. Then, the lessor computes $e_i \oplus d_i \cdot (dk_{i,0} \oplus dk_{i,1})$ which cancels the Pauli error. The lessor checks whether the outcome equals $x[i]$ where $x[i]$ is the phase of $\ket{sk_i}$. Basically, the lessor checks whether the lessee performs the Hadamard measurements, which eliminate the information in the computational basis and render the key useless.

The informal explanation above is proven by reducing the security to the certified deletion~\cite{KMY24}. In the certified deletion, there are two participants: a challenger and an adversary. The challenger sends a BB84 state $\ket{x}_\theta$ to the adversary in a random basis $\theta$. The adversary sends $x^\prime$. The challenger checks whether $x$ and $x^\prime$ match on the bits encoded in the Hadamard basis. When the adversary passes the check, the challenger reveals the basis $\theta$ to the adversary. Then, the adversary sends $x^{\prime\prime}$. The challenger checks whether $x$ coincides with $x^{\prime\prime}$ on the bits encoded in the computational basis. Similar to the idea of SKL, the challenger checks whether the adversary obtains the information in the Hadamard basis, to verify that the information in the computational basis is deleted. It is well known that the adversary cannot produce the information in the computational basis and the Hadamard basis at the same time, except for negligible probability. 

We found the error-tolerant certified deletion in \cite{BI20}. In the error-tolerant certified deletion, the challenger allows $x^\prime$ and $x$ to mismatch on a fraction of $\delta$. In addition, the challenger uses an error-correcting code (ECC) $C$ to encode the information in the computational basis (see \Cref{thm:CDP-BI20} for a formal description). In this work, we attempt to reduce the security of the error-tolerant SKL to the error-tolerant certified deletion.

Similarly, the error-tolerant SKL allows the deletion certificates of $\ket{sk_i}$ to mismatch on a fraction of $\delta$. However, there is a significant structural difference between the error-tolerant certified deletion and our SKL. In our SKL protocol, the classical error-correcting code is applied to the entire message $m$ to ensure decryption correctness even in the presence of noise. In contrast, in the certified deletion game, the error correction is applied only to the part of the string $x$ that is encoded in the computational basis. 

We resolve the gap above using the shortened codes, which is a well-known but relatively less-explored concept in coding theory. The shortened code according to the basis $\theta$ is defined as below.
\begin{definition}[Shortened codes (Informal)]
A shortened code is simply a restricted sub-collection of an original code. We obtain the shortened code by deleting the codewords that have non-zero bits in the position to be encoded in the Hadamard basis.
\end{definition}
 We observe that the shortened code according to $\theta$ is a subcode, which is restricted to the computational basis. By replacing the fixed ECC in the error-tolerant certified deletion with the randomly chosen shortened code, we reduce the security of the error-tolerant SKL to the security of the error-tolerant certified deletion. Then, we can analyze the robustness against noise by inspecting the valid $\delta$ and ECC $C$ such that the error-tolerant certified deletion remains secure. We show an example with the concatenated Hamming code in \Cref{sec:pke-skl-ham}.

\subsection{Related topics}
\paragraph*{\bf Certified Deletion} In a certified deletion (CD) protocol, a sender encrypts the plaintext and sends it to a receiver. Then, the sender asks the receiver to delete the ciphertext. Like the Lessor in SKL, the sender is able to verify whether the receiver deletes the ciphertext. Since its proposal, CD has seen great development~\cite{BI20,BK22,KNT23,BKM+23,BJ24,BVK+24}.

\paragraph*{\bf Unclonable Cryptography} Unclonable cryptography (UC) includes quantum money~\cite{Wie83}, unclonable encryption~\cite{BL20}, copy-protection~\cite{Aar09}, etc. UC differs from SKL in the spirit of that, while the adversary for SKL can refuse to delete its key and make a copy, the adversary for UC cannot copy the key even in the worst circumstance.

\section{Preliminaries}
First, we introduce IND-CPA public-key encryption (PKE). In this work, we need only PKE for a single bit to implement our general PKE-SKL. So, IND-CPA security is the weakest possible security in this case. Though it is weak, it is sufficient to construct our protocol.
\paragraph*{\bf IND-CPA PKE:} The PKE works as follows:
\begin{enumerate}
    \item The receiver generates a public key $pk$ and a secret key $sk$. It publishes the public key.
    \item The sender can encrypt a single bit $b\in\{0,1\}$ using the public key.
    \item The receiver can decrypt the ciphertext with the secret key.
\end{enumerate}
The PKE scheme must satisfy the correctness and the IND-CPA security. The correctness states that, for any $b\in\{0,1\}$, when the sender encrypts $b$ and the receiver decrypts the ciphertext, the receiver decrypts correctly except for negligible probability. The security states that for any QPT adversary, it cannot distinguish between the ciphertext for $0$ and $1$.

\begin{definition}[Linear Codes]
    Let $l_m, l_n, d \in \mathbb N$. A $(l_n, l_m, d)$ linear code consists of an Encoder $Enc: \{0,1\}^{l_m} \rightarrow \{0,1\}^{l_n}$ and a Decoder $Dec: \{0,1\}^{l_n} \rightarrow \{0,1\}^{l_m}$. $Enc$ is a linear map that transforms $l_m$-bits messages into $l_n$-bits codewords. $Dec$ is a linear map that transforms $l_n$-bits codewords back to $l_m$-bits messages. $d$ is the minimum Hamming distance between two different codewords. The linear code corrects up to $\lfloor (d - 1)/2\rfloor$-bits errors.
\end{definition}

Besides the definition, we will introduce the parity check matrix and the syndromes, which are essential in our work.

\paragraph*{\bf Parity Check Matrix} For any $(l_n, l_m, d)$ linear code, there is a matrix $H$ in which each entry is an element of $\mathbb F_2$. Alternatively, the codewords of the linear code are denoted as $w \in \{0,1\}^{l_n}$ such that $Hw = \vec 0$.

\paragraph*{\bf Syndromes} Let $C$ be any $(l_n, l_m, d)$ linear code and $H$ be its parity check matrix. For any $w \in \{0,1\}^{l_n}$, the result $Hw$ is called the syndrome. The syndrome $\vec 0$ indicates that no error occurs. The number of different syndromes is $2^{l_n - l_m}$. We can also denote the number of different syndromes using the rank of the parity check matrix. It is denoted as $2^{{\rm rank}(H)}$.

\begin{definition}[Shortened codes]
    \label{dfn:shortened-codes}
    Let $n,m \in \mathbb N$. $C$ is a linear error correction code which encodes $n$-bits message into $m$ bits. For any $\theta \in \{0,1\}^m$, we define the shortened codes $C_{|\theta}$ as follows:
    \begin{equation}
        C_{|\theta} \coloneqq \{y \in \{0,1\}^m \mid \exists x\in\{0,1\}^n, C(x) = y \land \forall i\in {\cal I}, y[i] = 0 \}
    \end{equation}
    where ${\cal I} \coloneqq \{i\in[m] | \theta[i] = 1\}$. Since we do not care about the relation between the message and the codewords, we denote the code as a subspace of binary strings. In addition, we can always omit the $i$th bit that $i \in {\cal I}$, because such $y[i]$ are fixed to $0$.
\end{definition}

\paragraph*{\bf BB84 states} For strings $x \in \{0, 1\}^n$ and $\theta \in \{0, 1\}^n$, let $\ket{x}_\theta$ be defined as: 
$$ \ket{x}_\theta := H^{\theta[1]} |x[1]\rangle \otimes \cdots \otimes H^{\theta[n]} |x[n]\rangle $$
where $H$ is the Hadamard operator, and $x[i]$ and $\theta[i]$ are the $i$th bits of $x$ and $\theta$, respectively. A state of the form $\ket{x}_\theta$ is called a BB84 state.

\paragraph*{ $\bf x_{|\cal S}$} Let $x \in \{0,1\}^{n}$ and ${\cal S} \subseteq [n]$. $x_{|\cal S}$ is the subsequence obtained by extracting every bit $x_i$ such that $i \in \cal S$.

\paragraph*{\bf The security parameter} We denote the security parameter as $\lambda$ in this work. The larger the security parameter is, the more secure the protocol is. For example, the length of keys is a common security parameter.

A key part of our proof is the Certified Deletion Property. To state the property, we state the game ${\sf CDBB84}_{C,\theta}(\delta,\lambda)$ first.

Let $\lambda \in \mathbb N$ be the security parameter. Let $\delta \in [0,1/2]$. Let $n,m \in \mathbb N$ be polynomials in $\lambda$. Let $\Theta$ be the set of $\theta \in \{0,1\}^{n+m}$ with exactly $n$ 1s. Let $C_\lambda$ be an ECC whose codewords are $m$ bits, and $S$ is the set of possible syndromes. Consider the following game ${\sf CDBB84}_{C,\theta}(\delta,\lambda)$ between an adversary $A = (A_0, A_1)$ and the challenger.
    \begin{enumerate} 
    \item The challenger samples $\theta \in \Theta$ and generates $x$ uniformly at random. Let ${\cal I} = \{i \in [n+m] | \theta[i] = 1\}$ and ${\cal \bar I}$ be its complement. The challenger computes the syndrome $s$ of $x_{|\cal \bar I}$ using $C_\lambda$. The challenger sends $\ket{x}_{\theta}$ to $A_0$.
    \item $A_0$ outputs a classical string $x^* \in \{0, 1\}^{n+m}$ and a state $\text{st}$. 
    \item The challenger aborts the game and the adversary loses, if $hw(x^*_{|\cal I} \oplus x_{|\cal I}) \geq \delta n$.
    \item $\theta$, $x_{|\cal I}$, $s$, and $\text{st}$ are sent to $A_1$. $A_1$ outputs a classical string $x^\prime \in \{0, 1\}^{n+m}$. 
    \item The adversary wins if $x^\prime_{|\cal \bar I}= x_{|\cal \bar I}$. Otherwise, it loses.
    \end{enumerate} 

\begin{figure*}[tbp]
    \centering
    \includegraphics[width=0.8\linewidth]{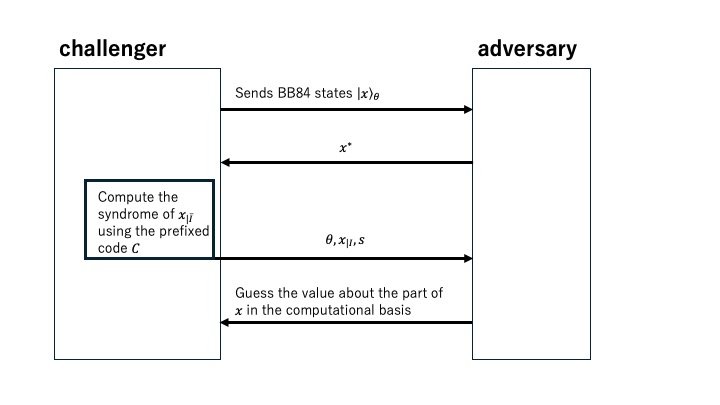}
    \caption{In the figure, we illustrate the certified deletion.}
    \label{fig:CD}
\end{figure*}

\begin{theorem}[Certified Deletion Property~\cite{BI20}]
    \label{thm:CDP-BI20}
    We describe the certified deletion game ${\sf CDBB84}_{C,\theta}(\delta,\lambda)$ in \Cref{fig:CD}. 
    
    For any adversary, it can win the game with probability at most 
    \begin{equation}
        \Pr[\text{The adversary wins}]=\frac{1}{2}\sqrt{2^{-m(1-h(\delta + \nu))+\log |S| + 1}} + 2\epsilon(\nu)
    \end{equation}
    where $\epsilon(\nu)$ is negligible in $m,n$ such that for any $\nu \in (0,1)$
    \begin{equation}
        \epsilon(\nu) = \exp(\frac{-m n^2 \nu^2}{(n+m)(n+1)}).
    \end{equation}
\end{theorem}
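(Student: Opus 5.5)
We will follow the entropic uncertainty route of Bartusek–Khurana/Broadbent–Islam: first move to an EPR-based version of the game, then bound the adversary's guessing probability through a min-entropy estimate, and finally account for the leaked syndrome with a chain rule.

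\textbf{EPR reformulation.} Instead of sending $\ket{x}_\theta$, the challenger prepares $n+m$ EPR pairs and sends one half of each to $A_0$. It samples $\theta$ only after $A_0$ outputs $x^*$ and $\text{st}$. The challenger then measures its halves at positions ${\cal I}$ in the Hadamard basis and at positions $\bar{\cal I}$ in the computational basis. This game is identical in distribution to ${\sf CDBB84}_{C,\theta}(\delta,\lambda)$, because the challenger's measurements commute with the adversary's actions. In this picture, winning requires $A_1$ to guess $x_{|\bar{\cal I}}$ from $\theta$, $x_{|{\cal I}}$, $s$ and $\text{st}$.

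\textbf{Sampling step.} Here we pass from the test on ${\cal I}$ to the unmeasured positions. Since $\theta$ is a uniformly random $n$-subset chosen after $A_0$ commits, a Hoeffding/Serfling bound for sampling without replacement applies to the Hadamard-basis error string relative to $x^*$. It shows that, conditioned on passing the check $hw(x^*_{|{\cal I}}\oplus x_{|{\cal I}})<\delta n$, the challenger's state on $\bar{\cal I}$ is $\epsilon(\nu)$-close to an ideal state. In that ideal state the Hadamard-basis outcomes on $\bar{\cal I}$ differ from $x^*_{|\bar{\cal I}}$ in fewer than $(\delta+\nu)m$ positions. The failure probability is of the stated form
\[
\exp\!\left(\frac{-mn^2\nu^2}{(n+m)(n+1)}\right).
\]
This is the technical core and the step I expect to be the main obstacle: the quantum sampling argument must be applied to a state that is not a mixture, and the constants must come out exactly as stated. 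I would follow Tomamichel–Leverrier / Bouman–Fehr quantum sampling, with the random-subset sampler and the Serfling tail bound. The ideal state is supported on a span of Hadamard-basis vectors within Hamming distance $(\delta+\nu)m$ of $x^*$. There are at most $2^{m h(\delta+\nu)}$ such vectors, which is where the binary entropy $h(\cdot)$ enters.

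\textbf{Entropy step.} For the ideal state, the uncertainty relation (or a direct computation) gives a bound on the computational-basis string $x_{|\bar{\cal I}}$ conditioned on the adversary's system $E$. Since the state lies in a subspace of dimension at most $2^{mh(\delta+\nu)}$ in the conjugate basis, we get
\[
H_{\min}(x_{|\bar{\cal I}}\mid E)\ge m\bigl(1-h(\delta+\nu)\bigr).
\]
The syndrome $s$ takes at most $|S|$ values, so the chain rule for min-entropy gives
\[
H_{\min}(x_{|\bar{\cal I}}\mid E\,s)\ge m\bigl(1-h(\delta+\nu)\bigr)-\log|S|.
\]
The values $x_{|{\cal I}}$ and $\theta$ come from the challenger's measurements on independent halves, so they add no information about $x_{|\bar{\cal I}}$ beyond what is already accounted for.

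\textbf{Combining the pieces.} The guessing probability satisfies $p_{\rm guess}=2^{-H_{\min}}$. Rather than using this directly, I would pass through a trace-distance/fidelity bound (a leftover-hash-type bound), which gives the square-root form $\frac12\sqrt{2^{-m(1-h(\delta+\nu))+\log|S|+1}}$. Adding the distance from the ideal state, counted once for the sampling and once for the conditioning on passing, contributes the term $2\epsilon(\nu)$ and yields the claimed bound.
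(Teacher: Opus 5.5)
Your outline is essentially the paper's proof: you purify to the EPR game, obtain the bound $m(1-h(\delta+\nu))$ on the smooth min-entropy of $x_{|\bar{\cal I}}$ from Serfling-type sampling (the paper imports this as Proposition~5.6 of \cite{BI20} rather than re-deriving it), and pay $\log|S|$ for the classical syndrome exactly as in Lemma~\ref{lem:min-entropy-diff}. The one soft spot is the final step, and the paper shares it in its appeal to Corollary~5.8 of \cite{BI20} ``without privacy amplification'': a one-way game has no hash for a leftover-hash bound to act on, but the direct estimate $\Pr[\text{win}]\le 2^{-(m(1-h(\delta+\nu))-\log|S|)}+2\epsilon(\nu)$ suffices, since it lies below the stated square-root expression whenever that exponent is at least $1$, which is the only regime Theorem~\ref{thm:asym-threshold} uses.
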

We noticed that they did not take into account the affect to the security by the number of syndromes in \cite{BI20}. We fix the flaw above and present a proof in the appendix.

\section{Main results}
\label{sec:proposal}
In this section, we present the construction of the error-tolerant PKE-SKL (see \Cref{sec:cons}). Then, we establish the trade-off between the security and the robustness against the noise (see \Cref{thm:ow-vra-test}), which is our main result. The outline of our proof is as follows. First, we combine the shortened codes and the certified deletion property (\Cref{thm:CDP-BI20}) to obtain \Cref{thm:asym-threshold}. Then, we use \Cref{thm:asym-threshold} to prove our main result (\Cref{thm:ow-vra-test}) by reducing the security of the error-tolerant SKL to the certified deletion property.

In \Cref{sec:ECC}, we link the threshold $\delta$ and the ECC in our PKE-SKL protocol to the Certified Deletion Property. In \Cref{sec:def}, we state the definition of the error-tolerant PKE-SKL. In \Cref{sec:cons}, we state the construction formally and reduce the security to the certified deletion property. In \Cref{sec:pke-skl-ham}, we use our main result to analyze the robustness of a specific SKL protocol.

\subsection{Feasible ECC and Threshold}
\label{sec:ECC}
In this section, we adapt the certified deletion property (\Cref{thm:CDP-BI20}) by using a randomly chosen shortened code instead of a fixed code. The adapted certified deletion property is essential in the security proof.

\begin{theorem}
    \label{thm:asym-threshold}
    Let $\lambda \in \mathbb N$ be the security parameter. Let $\epsilon \in (0,1)$ and $\delta\in (0,1)$ be two real numbers. Let $\{C_\lambda\}_{\lambda \in \mathbb N}$ be an arbitrary codes family, where $C_\lambda$ is a $(L_{out}, L_{in}, D)$ code. Let $\Theta$ be a set of bit strings.

    We adapt the certified deletion property (\Cref{thm:CDP-BI20}) by using the shortened code $C_{\lambda, \theta}$ for the computational basis. The challenger computes the syndrome $s$ with the randomly chosen shortened code instead of a fixed code.
    When $\epsilon$, $\delta$, $\Theta$, and $C_\lambda$ satisfy the following requirements:
    \begin{enumerate}
        \item We assume the code family has a non-vanishing code rate. In other words, $L_{in} > \epsilon L_{out}$.
        \item $\Theta$ is the set of bit strings with Hamming weight $\epsilon L_{out}$.
        \item $L_{out}(1 - \epsilon) (1 - h(\delta)) > (L_{out} - L_{in})$,
    \end{enumerate}
    for any adversary, the winning probability is negligible.
\end{theorem}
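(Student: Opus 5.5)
The plan is to reduce to \Cref{thm:CDP-BI20}, applied separately for each fixed $\theta$, and then average over $\theta \in \Theta$. Set the parameters of the game ${\sf CDBB84}$ as follows. The total length is $n+m = L_{out}$. There are $n = \epsilon L_{out}$ Hadamard positions, which is requirement 2. The remaining $m = (1-\epsilon)L_{out}$ positions form the computational part $\bar{\cal I}$. Fix a $\theta$. The challenger computes the syndrome of $x_{|\bar{\cal I}}$ with respect to the shortened code $C_{\lambda|\theta}$, restricted to the coordinates in $\bar{\cal I}$ as in \Cref{dfn:shortened-codes}. For this fixed $\theta$ the game is literally an instance of ${\sf CDBB84}$ with a fixed code on $m$ bits. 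Hence \Cref{thm:CDP-BI20} bounds the conditional winning probability by
\begin{equation}
\frac{1}{2}\sqrt{2^{-m(1-h(\delta+\nu)) + \log|S_\theta| + 1}} + 2\epsilon(\nu).
\end{equation}
Since this bound holds for every $\theta$, it also holds for the average over the random choice of $\theta$. The point of the shortened code is exactly that the adversary's information about the computational part is a syndrome of a code living only on $\bar{\cal I}$.

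The key step is to bound $\log|S_\theta|$ uniformly in $\theta$. View $C_{\lambda|\theta}$ as a linear code of length $m$. It is the subspace of codewords of $C_\lambda$ that vanish on ${\cal I}$, so it is the kernel of the projection of $C_\lambda$ onto ${\cal I}$. Its dimension is therefore at least $L_{in} - |{\cal I}| = L_{in} - \epsilon L_{out}$, which is positive by requirement 1. The number of syndromes is $2^{m - \dim C_{\lambda|\theta}}$, so
\begin{equation}
\log|S_\theta| \le (1-\epsilon)L_{out} - L_{in} + \epsilon L_{out} = L_{out} - L_{in},
\end{equation}
independently of $\theta$.

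Substituting into the exponent gives
\begin{equation}
-(1-\epsilon)L_{out}(1-h(\delta+\nu)) + (L_{out}-L_{in}) + 1 .
\end{equation}
By requirement 3, $(1-\epsilon)L_{out}(1-h(\delta)) - (L_{out}-L_{in})$ is positive. I will read this as a linear gap $\gamma L_{out}$ for a constant $\gamma>0$, which is natural for a code family of fixed rate. Because $h$ is continuous, I can choose a small constant $\nu>0$ so that the exponent is still at most $-\gamma' L_{out}$ with $\gamma'>0$. The first term is then $2^{-\Omega(L_{out})}$. With $\nu$ constant and $n,m = \Theta(L_{out})$, the second term satisfies $\epsilon(\nu) = \exp(-\Theta(L_{out})\nu^2)$. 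Both terms are negligible in $\lambda$, provided $L_{out}$ grows polynomially with $\lambda$.

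The main obstacle I expect is this last step. Requirement 3 is stated as a strict inequality, so I need it to hold with a linear slack in order to absorb $\nu$ and the additive $+1$. If the gap were only $o(L_{out})$, the bound would not be negligible. I would handle this by interpreting requirement 3 asymptotically, with rates $L_{in}/L_{out}$ and $\epsilon$ fixed, so that strict inequality of the rates gives the linear gap. A secondary point is the dimension count of the shortened code. The inequality $\dim C_{\lambda|\theta} \ge L_{in} - |{\cal I}|$ uses that the encoder is injective, so that $\dim C_\lambda = L_{in}$, and this should be noted explicitly.
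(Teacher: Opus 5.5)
Your syndrome count is right, and your remark that requirement 3 needs a linear gap is a fair sharpening. The weak point is the step where you apply \Cref{thm:CDP-BI20} ``separately for each fixed $\theta$'' and then average.

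\textbf{Why the per-$\theta$ step fails.} That theorem is not a per-$\theta$ statement. It bounds the winning probability averaged over a uniformly random $\theta\in\Theta$ that is hidden from $A_0$, and this randomness is the entire source of security. Conditioned on $\theta=\theta_0$, the adversary that measures qubit $i$ in basis $\theta_0[i]$ passes the test and outputs $x_{|\bar{\cal I}}$ with probability $1$. So the claim that the bound ``holds for every $\theta$'' is false, and there is nothing to average.

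The modified game is also not literally an instance of ${\sf CDBB84}$. There, a single code is applied to the $m$-bit string $x_{|\bar{\cal I}}$ for every $\theta$; here the code $C_{\lambda|\theta}$ changes with $\theta$. The obvious patch is to compare, for each $\theta_0$, with the ${\sf CDBB84}$ game for the fixed code $C_{\lambda|\theta_0}$. That only bounds the winning probability conditioned on $\theta=\theta_0$ by $|\Theta|$ times that game's winning probability. This loses a factor $\binom{L_{out}}{\epsilon L_{out}}\approx 2^{h(\epsilon)L_{out}}$, which outruns the decay of the $2\epsilon(\nu)$ term.

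\textbf{The missing idea.} You need the proof of \Cref{thm:CDP-BI20}, not its statement. In the appendix, the code enters only at the very end:
\begin{itemize}
\item The bound $H^{\epsilon}_{\min}(X|VWBS_\theta)_\rho\ge m(1-h(\delta+\nu))$ imported from \cite{BI20} does not involve the code, and it already has $\theta$ in the conditioning register.
\item The syndrome is then charged through the chain rule of \Cref{lem:min-entropy-diff}, at a cost of $\log$ of the syndrome alphabet size.
\end{itemize}
Since $s$ is a classical function of $x$ and of the classical $\theta$, the same guessing argument goes through when the code depends on $\theta$. The only requirement is that the alphabet size be bounded uniformly in $\theta$. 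For example, take $s=H_{\bar{\cal I}}\,x_{|\bar{\cal I}}\in\{0,1\}^{L_{out}-L_{in}}$, where $H_{\bar{\cal I}}$ is the parity-check matrix of $C_\lambda$ with the columns indexed by ${\cal I}$ deleted. This is what the paper's direct substitution of $\log|S|\le L_{out}-L_{in}$ into the bound tacitly relies on.

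With that in place, the rest of your argument coincides with the paper's. Your rank--nullity bound $\dim C_{\lambda|\theta}\ge L_{in}-\epsilon L_{out}$ is the dual form of \Cref{lem:syndrome-number}, which notes that deleting parity-check columns cannot raise the rank. Your linear-gap reading of requirement 3 is exactly what the paper implicitly assumes when it writes the exponent as $O(-L_{out})$.
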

The theorem above states the relation between the error-correction code and the threshold $\delta$. For any non-vanishing code, we can compute how many incorrect guesses about $x_{|\cal I}$ are permitted. In \Cref{sec:cons}, the theorem helps us to assess the decryption correctness and the verification correctness.

Below, we prove \Cref{thm:asym-threshold}.
\begin{proof}[Proof of \Cref{thm:asym-threshold}]
First, we discuss the validity of the three requirements in \Cref{thm:asym-threshold}. For any code family with a non-vanishing rate, it is possible to find a constant $\epsilon$ such that $\epsilon L_{out} < L_{in}$. Otherwise, the coding rate is vanishing. Whenever $\epsilon L_{out} < L_{in}$, we have $L_{out} - L_{in} < (1- \epsilon)L_{out}$ and there exists a $\delta$ such that $L_{out}(1 - \epsilon) (1 - h(\delta)) > (L_{out} - L_{in})$.

The outline of the proof is as follows. We note that the shortened codes have no more syndromes than the original code. This is concluded in \Cref{lem:syndrome-number}. Then, we leverage \Cref{thm:CDP-BI20} to prove the security of ${\sf CDBB84}_{C,\theta}(\delta, \lambda)$, where we upperbound the number of syndromes using that of the original code. This suffices to prove \Cref{thm:asym-threshold}.
\begin{lemma}
    \label{lem:syndrome-number}
    The number of syndromes of the shortened codes $\{C_\theta\}_{\theta \in \{0,1\}^*}$ is at most the number of syndromes of the original code $C$.
\end{lemma}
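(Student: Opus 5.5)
The plan is to compare syndrome counts through dimensions of linear subspaces. Let $C$ be an $(l_n,l_m,d)$ linear code with parity check matrix $H$, so that $C=\ker H$ and the number of syndromes is $2^{{\rm rank}(H)}=2^{l_n-l_m}=2^{l_n-\dim C}$. For a fixed $\theta$ with index set ${\cal I}=\{i\mid\theta[i]=1\}$, \Cref{dfn:shortened-codes} says we view $C_{|\theta}$ as a code of length $l_n-|{\cal I}|$ by deleting the coordinates in ${\cal I}$. In the certified deletion game these are exactly the positions that carry $x_{|\cal \bar I}$. The quantity to bound is therefore $2^{(l_n-|{\cal I}|)-\dim C_{|\theta}}$, and we want to show it is at most $2^{l_n-\dim C}$. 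Equivalently, we must show
\[
\dim C-\dim C_{|\theta}\le |{\cal I}|.
\]

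First, $C_{|\theta}$ is a linear subspace. It is the intersection of $C$ with the coordinate subspace $V_{\cal I}=\{y\mid y[i]=0\ \forall i\in{\cal I}\}$, and deleting the coordinates in ${\cal I}$ is an injective linear map on $V_{\cal I}$. Hence the dimension of $C_{|\theta}$, read either as a subspace of $\{0,1\}^{l_n}$ or as a punctured code of length $l_n-|{\cal I}|$, is $\dim(C\cap V_{\cal I})$.

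Second, consider the restriction map $\pi_{\cal I}:C\to\{0,1\}^{|{\cal I}|}$, $y\mapsto y_{|\cal I}$. It is linear with kernel $C\cap V_{\cal I}$. By rank–nullity,
\[
\dim C-\dim C_{|\theta}={\rm rank}(\pi_{\cal I})\le|{\cal I}|,
\]
which is the required inequality.

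In parity-check language the same argument reads as follows. If we take $H'$ to be $H$ with the columns indexed by ${\cal I}$ removed, then $C_{|\theta}=\ker H'$, since the removed coordinates are forced to be zero. Deleting columns cannot increase rank, so ${\rm rank}(H')\le{\rm rank}(H)$. This form is the cleaner one to present, because $H'$ computes the syndrome of $x_{|\cal \bar I}$ in the game directly.

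The main obstacle is being careful about which ambient length is used. If $C_{|\theta}$ were kept as a subspace of $\{0,1\}^{l_n}$, its syndrome count $2^{l_n-\dim C_{|\theta}}$ could exceed that of $C$, and the claim would fail. The lemma therefore relies on the convention in \Cref{dfn:shortened-codes} that the coordinates in ${\cal I}$ are omitted, and I would state this explicitly. Since the bound holds for every $\theta$, it holds uniformly over the random choice of $\theta\in\Theta$, and that uniformity is what the subsequent application of \Cref{thm:CDP-BI20} needs.
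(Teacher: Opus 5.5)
Your proposal is correct and follows essentially the same route as the paper. The paper's proof is exactly your parity-check paragraph: delete the columns of $H$ indexed by ${\cal I}$ and note that the column span, and hence the rank and the syndrome count $2^{{\rm rank}(H')}$, cannot grow. Your rank--nullity computation is an equivalent dimension count. Your explicit check that $C_{|\theta}=\ker H'$, and your remark that the claim depends on the punctured length-$(l_n-|{\cal I}|)$ convention in the definition of shortened codes, fill in points the paper leaves implicit.
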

\begin{proof}
    Let $H$ be the parity check matrix of the original code. We denote the parity check matrix of the shortened code as $H^\prime$. The number of syndromes is $2^{{\rm rank}(H^\prime)}$. We show that the rank of the parity check matrix is non-increasing.

    For any $\theta \in \{0,1\}^*$, the matrix $H^\prime$ is obtained by removing the $i$th columns such that $\theta[i] = 1$. This operation does not introduce new vectors into the support of $H^\prime$. Thus, the rank of the new code is non-increasing.
\end{proof}

First, we remind the reader of the security of the Certified Deletion game (see \Cref{thm:CDP-BI20})
\begin{equation}
    \frac{1}{2}\sqrt{2^{-m(1-h(\delta + \nu))+\log |S| + 1}} + 2\epsilon(\nu).
\end{equation}
By \Cref{lem:syndrome-number}, we have that
\begin{equation}
    \log |S| \leq L_{out} - L_{in}
\end{equation}
where $2^{L_{out} - L_{in}}$ is the number of the original code's syndromes. Also, the number of bits encoded in the computational basis $m$ equals $L_{out}(1-\epsilon)$. By substituting $m$ and $\log |S|$, we obtain
\begin{equation}
    \begin{aligned}
        &\frac{1}{2}\sqrt{2^{-m(1-h(\delta + \nu))+\log |S| + 1}} + 2\epsilon(\nu)  \\
        \leq & \frac{1}{2}\sqrt{2^{-L_{out}(1-\epsilon)(1-h(\delta + \nu))+(L_{out} - L_{in}) + 1}} + 2\epsilon(\nu).
    \end{aligned}
\end{equation}
By the $3$rd requirement in \Cref{thm:asym-threshold}, we can take a sufficiently small constant $\nu$ and obtain that $-L_{out}(1-\epsilon)(1-h(\delta + \nu))+(L_{out} - L_{in}) + 1 = O(-L_{out})$. This implies that $\frac{1}{2}\sqrt{2^{-L_{out}(1-\epsilon)(1-h(\delta + \nu))+(L_{out} - L_{in}) + 1}}$ is negligible in $L_{out}$, and in $\lambda$ implicitly. As stated in \Cref{thm:CDP-BI20}, $\epsilon(\nu)$ is negligible in $L_{out}$ and $L_{in}$. Thus, the winning probability for any adversary is negligible and we complete the proof.
\end{proof}


\subsection{The definition of error-tolerant SKL}
\label{sec:def}
\begin{figure*}
    \centering
    \includegraphics[width=0.8\linewidth]{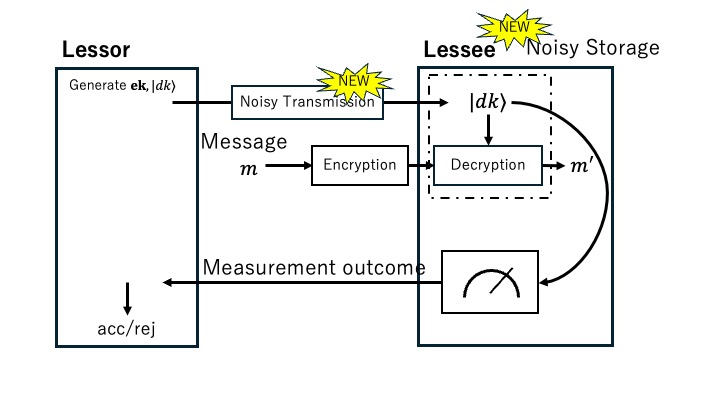}
    \caption{The figure above shows the outline of PKE-SKL. The lessor (the owner of the key) encodes its secret key into a quantum state $\ket{dk}$ and sends it to the lessee (who borrows the key). When the lessee holds the key, it can decrypt the ciphertext with the key for arbitrary times without destroying the key. In addition, when the lessor asks the lessee to return its key, the lessee measures the key and sends the outcome as the certificate of deletion. The lessor checks whether the certificate is valid.}
    \label{fig:PKE-SKL-DEF}
\end{figure*}
\begin{definition}[error-tolerant PKE-SKL]
\label{dfn:PKE-SKL}

Let $\lambda$ be the security paramter. In \Cref{fig:PKE-SKL-DEF}, we show the brief flow of PKE-SKL. Our definition differs from that in the prior works in that the key is subject to noise during transmission and storage. We state the correctness below:
\paragraph*{\bf Decryption Correctness}: When the key is affected by the noise, the probability that the original message $m$ and the decryption result $m^\prime$ differ is negligible.
\paragraph*{\bf Deletion Verification Correctness}: When the key is affected by the noise and the lessee deletes the key honestly, the probability that the lessor rejects is negligible.
\end{definition}

We state the security of PKE-SKL below.
\paragraph*{\bf IND-VRA Security:} When the lessee passes the verification,
\begin{enumerate}
    \item The lessor sends the verification key to the lessee.
    \item The lessee chooses two messages $m_0$ and $m_1$. The lessee sends the messages to the lessor. 
    \item The lessor samples $b \in \{0,1\}$ and encrypts $m_b$. The lessor sends the ciphertext to the lessee, and the lessee guesses the value of $b$.
\end{enumerate}
The PKE-SKL protocol is IND-VRA secure if the probability that the lessee makes a correct guess is at most $\frac{1}{2} + {\rm negl}(\lambda)$. We define the guessing probability to be $1/2$ when the lessee does not pass the verification.

Then, we state a weaker security definition below.
\paragraph*{\bf OW-VRA Security:} When the lessee passes the verification,
\begin{enumerate}
    \item The lessor sends the verification key to the lessee.
    \item The lessor chooses a message uniformly at random and encrypts it. Then, it sends the ciphertext to the lessee.
    \item The lessee guesses the value of $m$.
\end{enumerate}
The PKE-SKL protocol is OW-VRA secure if the probability that the lessee makes a correct guess is at most ${\rm negl}(\lambda)$.

We utilize the following lemma so that we only need to construct a protocol with the weaker security definition.
\begin{lemma}[\cite{APV23,KMY24}]
    For any PKE-SKL protocol that satisfies the OW-VRA security, it can be transformed into one with the IND-VRA security.
\end{lemma}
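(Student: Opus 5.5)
The plan is to prove the lemma with the standard hardcore-bit (Goldreich--Levin) compiler, the route used in \cite{APV23,KMY24}. Given an OW-VRA secure scheme $\Pi=({\sf KG},{\sf Enc},{\sf Dec},{\sf Del},{\sf Vrfy})$ whose messages are strings $m\in\{0,1\}^{\ell}$, we build $\Pi'$ for single-bit messages $b$. Key generation, the leased quantum key, deletion and verification are left unchanged, so decryption correctness and deletion verification correctness (including tolerance to noise) carry over directly. To encrypt $b$, the sender samples $m\leftarrow\{0,1\}^{\ell}$ and $r\leftarrow\{0,1\}^{\ell}$ and outputs $(\Pi.{\sf Enc}(ek,m),\,r,\,\langle m,r\rangle\oplus b)$. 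Decryption recovers $m$ from the first component and then unmasks $b$. Longer messages can be handled bit by bit with a standard hybrid argument, since IND security for single bits extends to multi-bit messages in the public-key setting.

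For IND-VRA security, suppose an adversary $A$ passes verification and then guesses $b$ with advantage $\eta(\lambda)$ that is not negligible. We build an OW-VRA adversary $B$ against $\Pi$. $B$ runs $A$ through the leasing and deletion phases, forwarding all messages unchanged, so $B$ passes verification exactly when $A$ does. When $B$ receives the verification key and a challenge ciphertext $ct$ of a uniform $m$, it forwards the verification key to $A$. It then defines a predictor $P(r)$: run $A$'s post-verification stage on $(ct,r,u)$ with $u$ a random bit, and output $u\oplus A\text{'s guess}$. This predicts $\langle m,r\rangle$ with advantage related to $\eta$.

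By a standard averaging argument, with probability at least $\eta/2$ over the first stage the conditional advantage is at least $\eta/2$. On those runs, $B$ applies the Goldreich--Levin decoder to $P$ and obtains $m$ with probability ${\rm poly}(\eta)$, which is not negligible. This contradicts OW-VRA security.

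The main obstacle is that $A$'s post-verification state is quantum. Goldreich--Levin requires querying the predictor on many values of $r$, but measuring $A$'s guess on one query disturbs the state needed for the next. I would resolve this with the quantum Goldreich--Levin theorem of Adcock--Cleve / \cite{CLLZ21}-style. In that version a single coherent run of $A$ (run the guess unitary, Hadamard the $r$ register, and uncompute) extracts $m$ with probability $\Omega(\eta^2)$. Making the predictor unitary requires purifying $A$ and handling the random bit $u$ and the phase, which the cited works do. Since this step is exactly the content of \cite{APV23,KMY24}, I would invoke it there rather than reprove it. I would only verify that our noise tolerance does not interfere: it does not, because verification acceptance is a classical event that occurs before the challenge and is identical in both games.
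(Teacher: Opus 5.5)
Your proposal is correct and takes essentially the same route as the paper: the paper gives no proof of its own but imports the lemma from \cite{APV23,KMY24}, where it is obtained by this Goldreich--Levin hardcore-bit compiler, $(\mathsf{Enc}(ek,m), r, \langle m,r\rangle\oplus b)$, together with the quantum Goldreich--Levin lemma for quantum auxiliary input. The only thing the paper adds is the remark that the OW-VRA notion is unchanged in the error-tolerant setting, which is the same point you make: deletion verification is a classical event that happens before the challenge and is identical in both games.
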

\noindent Because the OW-VRA security of our protocol is the same as that in the prior works, the lemma above is still available.

\subsection{The error-tolerant construction}
\label{sec:cons}
\begin{figure*}
    \centering
    \includegraphics[width=0.8\linewidth]{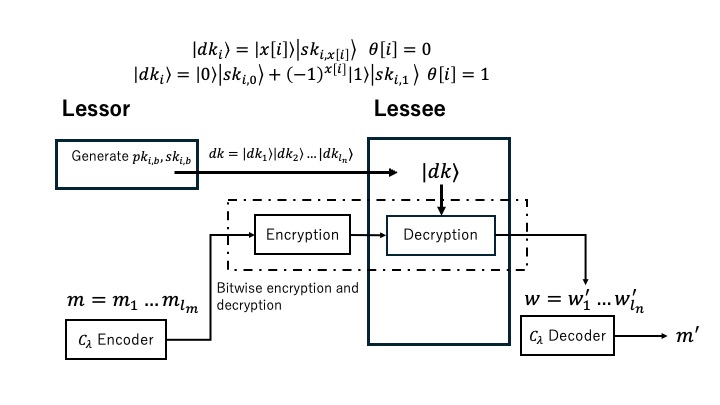}
    \caption{The leased key is $\ket{dk} = \ket{dk_1}\dots \ket{dk_{l_n}}$ where $\ket{dk_i}$ is encoded in the computational basis for $\theta[i] = 0$ and  Hadamard basis for $\theta[i] = 1$. When encrypting the message $m = m_1 \dots m_{l_m}$, the sender encodes $m$ into the codeword $w$ with the ECC $C_\lambda$. Then, the sender encrypts each bit of the codeword $w$ into $ct_{i,0}$ and $ct_{i,1}$ with the public keys $pk_{i,0}$ and $pk_{i,1}$, respectively. When decrypting, the lessee executes the classical decryption algorithm coherently to obtain each bit of the codeword. Then, the lessee decodes the (potentially corrupted) codeword to obtain the original message.}
    \label{fig:enc-dec}
    \includegraphics[width=0.8\linewidth]{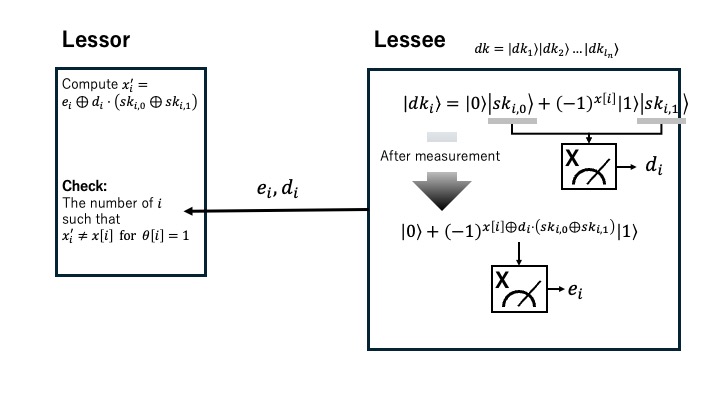}
    \caption{When the honest lessee deletes the key, it measures the last register in the Hadamard basis and obtains a bit string $d_i$. As a result, a Z error $Z^{d_i \cdot (sk_{i,0} \oplus sk_{i,1})}$ is applied to the remaining state. Then, the honest lessee measures the remaining qubit in the Hadamard basis and obtains a single bit $e_i$. The certificate is $(e_i,d_i)$. When no error occurs on the keys, the lessor cancels the Z error and obtains $x^\prime_i = x[i]$ for every $i$ such that $\ket{dk_i}$ is encoded in the Hadamard basis.}
    \label{fig:ourprotocol}
\end{figure*}
In this section, we present our error-tolerant PKE-SKL. Let $\lambda$ be the security parameter. The lessor and the lessee agree on an ECC $C_\lambda$ in advance. $l_n$ is the length of the codewords and $l_m$ is the length of the original message. The honest lessee proceeds as follows (see Fig.~\ref{fig:enc-dec}):
\begin{enumerate}
    \item The lessor samples $x, \theta \in\{0, 1\}^{l_n}$ where $\theta$ has Hamming weight exactly $\epsilon l_n$ uniformly at random. Then, it generates the public keys $pk_{i,b}$ and the secret key $sk_{i,b}$ for each $i\in[l_n]$ and $b\in\{0,1\}$. The leased key is $\ket{dk} = \ket{dk_1}\dots \ket{dk_{l_n}}$ where $\ket{dk_i} = \ket{x[i]}\ket{sk_{i,x[i]}}$ for $\theta[i] = 0$ and $\ket{dk_i} = [\ket{0}\ket{sk_{i, 0}} + (-1)^{x[i]}\ket{1}\ket{sk_{i, 1}}]/\sqrt{2}$ for $\theta[i] = 1$.
    \item The sender can choose a message $m$ and encrypt the message as follows. It encodes $m$ using a preshared ECC to obtain $w = C_\lambda(m)$. Then, it encrypts each bit of $w$ to obtain $\textsf{ct}_{i,b}$ for each $i$ and $b\in\{0,1\}$. The sender sends the ciphertext ($\textsf{ct}_{i,b}$)$_{i \in [l_n], b \in \{0, 1\}}$ to the lessee.
    \item The lessee decrypts each bit of the codeword $w$ using the following unitary $D_i$ on register ${\sf X}_i, {\sf PKE.DK}_i, {\sf OUT}_i$:
    $$ \ket{b}\ket{sk_{i,b}} \ket{v} \overset{D_i}{\mapsto}\ket{b} \ket{sk_{i,b}} \mid v \oplus \text{PKE.Dec}(sk_{i,b}, \textsf{ct}_{i,b}) \rangle $$ 
    where $b, v \in \{0, 1\}$ and ${sk_{i,b}} \in \{0, 1\}^{l_{dk}}$. The lessee measures the last register to obtain the outcome $w_i^\prime$. Let $w = w'_1 || \dots || w'_{l_n}$. The lessee decodes $w$ using the preshared ECC to obtain the original message.
    \item When the lessor asks the lessee to return its key, the lessee measures every bit of $\ket{dk_i}$ in the Hadamard basis to obtain an outcome $(e_i, d_i) \in \{0, 1\} \times \{0, 1\}^{l_{dk}}$ (see \Cref{fig:ourprotocol}). The lessee sends the certificate $\text{\sf cert} := (e_i, d_i)_{i \in [l_n]}$ to the lessor.
    \item The lessor rejects if the number of $i$s that satisfies the condition below is geater or equal to $\delta l_{H}$ where $l_{H} = \epsilon l_{n}$:
    $$e_i \neq x[i] \oplus d_i \cdot ({sk}_{i,0} \oplus {sk}_{i,1})$$ for $x[i]$ encoded in the Hadamard basis. Otherwise, the lessor accepts.
\end{enumerate}
For those who want to know the formal description of our construction, please refer to \Cref{sec:formal}.

\begin{theorem}
    \label{thm:ow-vra-test}
    For any $\epsilon, \Theta, \{C_\lambda\}_{\lambda \in \mathbb N}$ satisfying the three requirements in \Cref{thm:asym-threshold}, the PKE-SKL construction above satisfies OW-VRA security (see \Cref{dfn:pke-ow-vra}).
\end{theorem}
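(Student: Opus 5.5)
We argue by contradiction: from an adversary $\mathcal A$ that breaks OW-VRA security of the construction with non-negligible probability, we build an adversary $\mathcal B=(\mathcal B_0,\mathcal B_1)$ that wins the adapted certified deletion game of \Cref{thm:asym-threshold} (shortened code $C_{\lambda,\theta}$, threshold $\delta$, $\Theta$ of weight $\epsilon l_n$) with non-negligible probability. This contradicts \Cref{thm:asym-threshold}. The reduction follows the hybrid strategy of \cite{KMY24}; the new ingredient is the role of the shortened code.

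\textbf{Step 1: embedding the BB84 state.} $\mathcal B_0$ receives $\ket{x}_\theta$ with $l_n$ qubits. It generates all PKE key pairs $(pk_{i,b},sk_{i,b})$ itself. It then coherently applies the isometry $\ket{b}\mapsto\ket{b}\ket{sk_{i,b}}$ to every qubit. This produces exactly the leased key $\ket{dk}$ of the construction, for the same hidden $(x,\theta)$. $\mathcal B_0$ runs $\mathcal A$ with $\ket{dk}$ and the public keys; it can answer decryption-free queries because it knows all public keys. When $\mathcal A$ outputs $\mathsf{cert}=(e_i,d_i)_i$, $\mathcal B_0$ sets $x^*[i]=e_i\oplus d_i\cdot(sk_{i,0}\oplus sk_{i,1})$ and outputs $x^*$ together with $\mathcal A$'s residual state. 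Verification in the SKL game accepts if and only if $hw(x^*_{|\mathcal I}\oplus x_{|\mathcal I})<\delta\epsilon l_n$. So the challenger's abort condition coincides exactly with SKL rejection.

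\textbf{Step 2: producing the challenge and extracting $x_{|\bar{\mathcal I}}$.} $\mathcal B_1$ receives $\theta$, $x_{|\mathcal I}$ and the syndrome $s$ of $x_{|\bar{\mathcal I}}$ with respect to $C_{\lambda,\theta}$. This is the main obstacle: $\mathcal B_1$ must give $\mathcal A$ a correctly distributed ciphertext of a uniformly random $m$ (plus the verification key), and must turn $\mathcal A$'s guess of $m$ into the full string $x_{|\bar{\mathcal I}}$.

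The plan is a hybrid argument using IND-CPA:
\begin{itemize}
\item[(a)] For $i\in\bar{\mathcal I}$, the honest key only contains $sk_{i,x[i]}$. Hence IND-CPA lets us replace $ct_{i,1-x[i]}$ by an encryption of $w[i]\oplus 1$, i.e.\ of $0$ after an appropriate shift, with only negligible change in $\mathcal A$'s success. To avoid needing $x[i]$ during the reduction, we guess it or use the standard approach of \cite{KMY24}, which encrypts $w[i]$ under key $x[i]$ and $w[i]\oplus x[i]$-masked bits under both keys.
\item[(b)] After this change, $\mathcal A$'s view of the computational positions depends only on $w_{|\bar{\mathcal I}}\oplus x_{|\bar{\mathcal I}}$-type quantities. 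Choose $m$ uniformly. If we condition $w=C_\lambda(m)$ to lie in the coset determined by $x_{|\mathcal I}$ on $\mathcal I$, the restriction $w_{|\bar{\mathcal I}}$ ranges over a coset of the shortened code $C_{\lambda,\theta}$. Given the syndrome $s$, $\mathcal B_1$ can sample a consistent coset representative, and the simulated ciphertext then has the right distribution.
\item[(c)] When $\mathcal A$ outputs $m'=m$, $\mathcal B_1$ computes $C_\lambda(m)$. Combined with the known representative and $s$, this pins down $x_{|\bar{\mathcal I}}$, which $\mathcal B_1$ outputs as $x'$.
\end{itemize}
I expect the real difficulty to be making (b) and (c) exact. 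Concretely, we must show that the ciphertext, simulated from $(\theta,x_{|\mathcal I},s)$ only, is computationally indistinguishable from the real one given $\mathcal A$'s state, and that a correct $m$ determines $x_{|\bar{\mathcal I}}$ uniquely inside the coset $\{y: H'y=s\}$. Here Lemma~\ref{lem:syndrome-number} guarantees that the coset structure is that of $C_{\lambda,\theta}$.

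\textbf{Step 3: conclusion.} Summing the negligible IND-CPA hybrid losses, we get
\[
\Pr[\mathcal B\text{ wins}]\ge\Pr[\mathcal A\text{ breaks OW-VRA}]-\mathrm{negl}(\lambda).
\]
The three requirements, on $\epsilon$, on $\Theta$, and on $L_{out}(1-\epsilon)(1-h(\delta))>L_{out}-L_{in}$, are exactly the hypotheses of \Cref{thm:asym-threshold}. That theorem therefore makes the left side negligible, so $\mathcal A$'s OW-VRA advantage is negligible.
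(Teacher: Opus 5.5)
Your Step 1 (isometric embedding, certificate conversion, abort condition matching rejection) and the IND-CPA switch of $ct_{i,1-x[i]}$ in (a) match the paper's proof. But Step 2(b)--(c), which you defer as ``the real difficulty'', is the only place where the error-correcting code enters. There the proposal has a genuine gap, and several of your hints point the wrong way.
\begin{itemize}
\item Guessing $x[i]$ on all of $\bar{\mathcal I}$ loses a factor $2^{-(1-\epsilon)l_n}$.
\item The bitwise trick of \cite{KMY24}, which absorbs $x[i]$ into an independent uniform message bit, fails here. Shifting $C_\lambda(m)$ by $x$ on $\bar{\mathcal I}$ generally leaves the code.
\item Conditioning $w$ on a coset ``determined by $x_{|\mathcal I}$'' is beside the point, since the challenge is independent of $x$.
\item \Cref{lem:syndrome-number} only bounds the number of syndromes inside \Cref{thm:asym-threshold}. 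It says nothing about the simulation.
\end{itemize}
Your worry that the simulated ciphertext needs a further computational indistinguishability argument, and that extraction needs a uniqueness argument, both come from not having the concrete mechanism.

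The missing step is a perfect re-randomization of the challenge by a codeword of the shortened code. Fix any $e_s$ whose syndrome in $C_{\lambda,\theta}$ is $s$. Let $y$ equal $x_{|\bar{\mathcal I}}\oplus e_s$ on $\bar{\mathcal I}$ and $0$ on $\mathcal I$. Then $y\in C_{\lambda,\theta}\subseteq C_\lambda$, so by linearity $C_\lambda(m^*)\oplus y$ is again a uniformly random codeword when $m^*$ is uniform. Encrypting it instead of $C_\lambda(m^*)$, with the winning condition updated accordingly, leaves the success probability unchanged.

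Combined with your IND-CPA switch, the ciphertexts become:
\begin{itemize}
\item for $i\in\bar{\mathcal I}$, $ct_{i,b}$ encrypts $C_\lambda(m^*)[i]\oplus e_s[i]\oplus b$, which no longer depends on $x[i]$;
\item for $i\in\mathcal I$, both ciphertexts encrypt $C_\lambda(m^*)[i]$.
\end{itemize}
All of this is computable from $(\theta,s)$ and a self-chosen $m^*$, so the simulation is exact. Extraction is then deterministic. A correct answer $m'$ satisfies $C_\lambda(m')=C_\lambda(m^*)\oplus y$, hence $x_{|\bar{\mathcal I}}=(C_\lambda(m')\oplus C_\lambda(m^*))_{|\bar{\mathcal I}}\oplus e_s$.

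Your coset picture is the distributional form of this shift: given $w_{|\mathcal I}$, the string $w_{|\bar{\mathcal I}}\oplus x_{|\bar{\mathcal I}}$ is uniform on a coset of $C_{\lambda,\theta}$ whose label depends on $x_{|\bar{\mathcal I}}$ only through $s$. It has to be stated and used in this explicit form for the reduction to go through.
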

\begin{proof}
\begin{figure*}[htbp]
    \centering
    \includegraphics[width=0.8\linewidth]{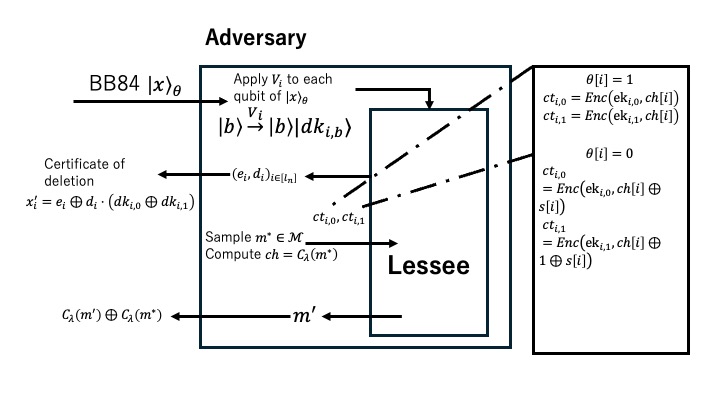}
    \caption{In the figure, we show the adversary that breaks the Certified Deletion Property using the cheating lessee. When the adversary receives a BB84 state $\ket{x}_\theta$, it samples classical PKE key pairs $(\text{\sf ek}_{i,b}, \text{\sf dk}_{i,b})$ for all $i \in [l_n]$ and $b \in \{0, 1\}$. Then, the adversary applies the isometry $V_i$ to each qubit of the BB84 state and sends the resulting state to the lessee. The resulting state is of the same form as the leased secret key. Then, the adversary transforms the certificate $(e_i, d_i)$ into $x_i^\prime$, where each $x_i^\prime$ is its gueess about $x[i]$. Once the adversary passes the first test in the game, the adversary receives $\theta, x_{|{\cal I}}, s \in \{0,1\}^{l_n}$ from the challenger. It samples the string $m^*$ and computes the query $ct_{i,0}, ct_{i,1}$ as shown in the figure above. Finally, the adversary receives $m^\prime$ and computes $C_\lambda(m^\prime) \oplus C_\lambda(m^*)$.}
    \label{fig:adversary}
\end{figure*}
In \Cref{fig:adversary}, we show that there exists an adversary that breaks the Certified Deletion Property if the lessee breaks the OW-VRA security.

To understand how the adversary works, we present a paradoxical adversary that generates the query using $x_{|\bar{\cal I}}$. It is paradoxical in the sense that it uses the information $x_{|\bar{\cal I}}$ to obtain the real value of $x_{\cal}$. $x_{|\bar{\cal I}}$ is the part of $x$ encoded in the computational basis. Then, we show that the non-paradoxical adversary in \Cref{fig:adversary} simulates the paradoxical one, which completes our proof.

The paradoxical adversary works like the adversary in \Cref{fig:adversary}, except it encrypts $ch[i]\oplus x[i] \oplus s[i]$ for $\theta[i] = 0$ where $ch$ is the challenge string. $x_{|\bar{\cal I}} \oplus s$ is a codeword of the shortened code. Here, we abuse the notation of the syndrome $s$ to represent its error vector as well. Then, the new challenge string $ch[i]\oplus x[i] \oplus s[i]$ is still a codeword of the original code $C_\lambda$. Below, we assume the cheating lessee to pass the verification and decrypt correctly with probability $1$. This simplifies our proof and does not compromise the generality. When the cheating lessee succeeds with non-negligible probability, the resulting adversary breaks the certified deletion property with non-negligible probability as well.

When the lessee produces $(e_i, d_i)_{i \in [l_n]}$ that passes the lessor's verification, the adversary transforms it into a certificate of deletion for BB84 state $\ket{x}_\theta$. This can be shown by expecting the lessor's verification process in \Cref{fig:ourprotocol}. The lessor transforms each $(e_i, d_i)$ into $x_i^\prime$ first. Then, it checks whether $x^\prime = x_1^\prime \cdots x_{l_n}^\prime$ is a valid certificate of deletion for $\ket{x}_\theta$. The adversary simulates the transformation process; thus, it generates a valid certificate that passes the first verification in the Certified Deletion game in \Cref{thm:asym-threshold}.

Then, we show that $C(m^\prime) \oplus C_\lambda(m^*)$ is a good guess for $x_{|\bar{\cal I}}$. When the lessee receives encrypted $ch[i]\oplus x[i] \oplus s[i]$, its view is the same as decrypting a normal ciphertext. Thus, the lessee should output a $m^\prime$ such that $C_\lambda(m^\prime)=ch\oplus x_{|\bar{\cal I}} \oplus s$. The adversary computes $C(m^\prime) \oplus C_\lambda(m^*)$ which equals to $x_{|\bar{\cal I}} \oplus s$. Here, we abuse the notation of the syndrome $s$ to represent its error vector as well. Thus, we obtain the following lemma.

\begin{lemma}
    The paradoxical adversary wins the Certified Deletion game with the same probability as the cheating lessee succeeds.
\end{lemma}

The paradoxical adversary uses the information $x_{|\bar{\cal I}}$, which is hidden. However, we show that the adversary in \Cref{fig:adversary} suffices. The difference between the adversary and the paradoxical one is the how $ct_{i,0}$ and $ct_{i,1}$ are generated for $\theta[i] = 0$. We encrypt $ch[i] \oplus (1-x[i]) \oplus s[i]$ to obtain $ct_{i,1-x[i]}$ instead. The reader may think the modified ciphertext still depends on $x[i]$. However, this is not the case. As shown in \Cref{fig:adversary}, the adversary obtains $ct_{i,0}$ by encrypting $ch[i] \oplus s[i]$ and $ct_{i,1}$ by encrypting $ch[i] \oplus 1\oplus s[i]$, regardless of the real value of $x[i]$.

For those $\ket{dk_{i}}$ such that $\theta[i] = 0$, the lessee only get the information about $dk_{i,x[i]}$. Thus, the lessee is unable to detect the change of the ciphertext $ct_{i,1-x[i]}$. Thus, we obtain the following lemma.
\begin{lemma}
    No QPT lessee can distinguish between the paradoxical adversary and the non-paradoxical adversary. For any unbounded lessee, it can simply recover the secret key using the public key.
\end{lemma}

Finally, we state that if the cheating lessee exists, the adversary is able to break the Certified Deletion Property (\Cref{thm:asym-threshold}) with almost the same probability, except for a negligible loss. We complete the proof.

\end{proof}

\subsection{Example of PKE-SKL using the concatenated Hamming codes}
\label{sec:pke-skl-ham}
In this section, we give a concrete example in which the protocol uses the concatenated Hamming code to encrypt the message. We discuss the upper bound of the threshold $\delta$ under this case. We define the concatenated Hamming codes as follows.
\begin{definition}
    Let $n \in \mathbb N$. The concatenated Hamming codes are $(7n, 4n, 3)$ codes. We describe the encoder and the decoder below.
    \paragraph*{Encoder} For any $m \in \{0,1\}^{4n}$, the encoder 
    \begin{itemize}
        \item Divides $m$ into blocks of $4$ bits such that $m = m_1 || m_2 || \dots || m_n$.
        \item Outputs $w = w_1 || w_2 || \dots || w_n$ such that $w_i$ is the codeword of $m_i$.
    \end{itemize}
    \paragraph*{Decoder} For $w \in \{0,1\}^{7n} $, the decoder 
    \begin{itemize}
        \item Divides $w$ into blocks of $7$ bits such that $w = w_1 || w_2 || \dots || w_n$.
        \item Decrypts each block $w_i$ as the Hamming code does.
    \end{itemize}
\end{definition}

Below, we fix the ECC to the concatenated Hamming codes and $\epsilon$ to $0.5$. Then, we find the upper bound of the threshold $\delta$.

\begin{lemma}[The upper bound of $\delta$]
    \label{lem:upperbound}
    For any $\delta \in [0,1/2]$ such that $h(\delta) < 1/7$, the PKE-SKL utilizing the concatenated Hamming code and $\epsilon = 0.5$ is secure.
\end{lemma}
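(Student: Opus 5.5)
The plan is to obtain \Cref{lem:upperbound} as a direct instance of \Cref{thm:ow-vra-test}. We verify that the concatenated Hamming code with $\epsilon = 0.5$ and any $\delta$ with $h(\delta) < 1/7$ meets the three requirements of \Cref{thm:asym-threshold}, and then invoke \Cref{thm:ow-vra-test} to get OW-VRA security. The lemma from \cite{APV23,KMY24} then upgrades this to IND-VRA security. Throughout, the code $C_\lambda$ is the $(7n,4n,3)$ concatenated Hamming code with $n$ polynomial in $\lambda$, so $L_{out} = 7n$ and $L_{in} = 4n$.

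First, the non-vanishing rate requirement: we need $L_{in} > \epsilon L_{out}$, that is, $4n > 3.5n$. This holds for every $n \geq 1$, since the rate $4/7$ exceeds $1/2$. Second, we take $\Theta$ to be the set of strings of Hamming weight $\epsilon L_{out} = 3.5n$. This forces $n$ to be even, or else requires rounding to $\lfloor 3.5 n\rfloor$; the rounding only shifts quantities by $O(1)$ and is absorbed by the strict inequality below.

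Third, the main inequality $L_{out}(1-\epsilon)(1-h(\delta)) > L_{out}-L_{in}$ becomes
\begin{equation}
 7n \cdot \tfrac{1}{2}\,(1-h(\delta)) > 3n ,
\end{equation}
that is, $1 - h(\delta) > 6/7$, which is exactly $h(\delta) < 1/7$. Since the inequality is strict and $h$ is continuous, we can choose a constant $\nu>0$ with $h(\delta+\nu) < 1/7$ as well. The exponent in \Cref{thm:CDP-BI20} is then $-3.5n\,(1-h(\delta+\nu)) + 3n + 1 = -\Omega(n)$. So the bound from \Cref{thm:asym-threshold} is negligible, which is the only place the hypothesis is used.

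The step needing care is not the arithmetic but the syndrome count $\log|S| \le L_{out}-L_{in} = 3n$. This requires that the parity-check matrix of the concatenated code, which is block diagonal with $n$ copies of the $3\times 7$ Hamming check matrix, has rank $3n$. By \Cref{lem:syndrome-number}, deleting the Hadamard-basis columns cannot increase this rank. I would state this explicitly, since the block-diagonal structure makes it immediate. Once the three requirements are checked, the conclusion follows from \Cref{thm:ow-vra-test} without further work.
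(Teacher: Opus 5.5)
Your proposal is correct and follows the same route as the paper: you substitute $L_{out}=7n$, $L_{in}=4n$, $\epsilon=1/2$ into the requirements of \Cref{thm:asym-threshold}, check that $4n>3.5n$, reduce $3.5n(1-h(\delta))>3n$ to $h(\delta)<1/7$, and then appeal to \Cref{thm:ow-vra-test}. Your extra remarks are correct refinements that the paper leaves implicit: the rank-$3n$ block-diagonal parity-check matrix, the choice of a constant $\nu$ with $h(\delta+\nu)<1/7$, and the parity or rounding of $3.5n$.
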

\begin{proof}
    We can prove the lemma using \Cref{thm:asym-threshold}. Substitute $L_{in}$ with $4n$ and $L_{out}$ with $7n$, we can see that the first requirement in \Cref{thm:asym-threshold} is satisfied. 
    
    Then, we can substitute $L_{out}, L_{in}, \epsilon$ in the third requirement with the concrete values. The left side becomes
    \begin{equation}
        \begin{aligned}
            &L_{out}(1-\epsilon)(1-h(\delta)) \\
            =&7n \cdot 0.5 \cdot(1-h(\delta)) \\
            =&3.5n - 3.5n h(\delta)
        \end{aligned}
    \end{equation}
    The right side becomes
    \begin{equation}
        \begin{aligned}
            &L_{out} - L_{in} \\
            =&3n 
        \end{aligned}
    \end{equation}
    When $3.5n - 3.5n h(\delta) > 3n$ is satisfied, we can transform the inequality and obtain an upper bound $h(\delta) < 1/7$.
\end{proof}

We discuss the robustness against the errors below. Let the secret key be $\ket{dk_1}\ket{dk_2} \dots \ket{dk_{7n}}$. When at most one $\ket{dk_i}$ is corrupted, the protocol decrypts correctly. This is straightforward from the minimum Hamming distance of the code. In this case, at most one bit of the codeword is not decrypted correctly, and the concatenated Hamming code is able to correct the error. We want to point out that this is the worst case. If we divide the secret key into $n$ blocks, where each block corresponds to the block in the concatenated Hamming code, then the protocol decrypts correctly when there is at most one corrupted key in each block. Then, we discuss the verification correctness. We have shown the upper bound of $\delta$ in \Cref{lem:upperbound}, which is approximately $0.0203$. Thus, when the fraction of the corrupted secret key $\ket{dk_i}$ is below $0.0203$, the lessor certifies the deletion of the secret key correctly.

\section{Conclusion and Future works}
In this work, we constructed the first error-tolerant PKE-SKL scheme. We proved the security of our protocol and analyzed its robustness against errors. We utilized the tensored construction of the secret key in the prior works \cite{KMY24}. The construction makes it possible for the lessor to ensure the security while allowing the lessee to make wrong guesses for a fraction of bits in the Hadamard basis. Furthermore, we used the shortened code to connect the security of PKE-SKL and the error-tolerant certified deletion property~\cite{BI20}. Since shortened codes remain a relatively less explored concept in coding theory, our results not only benefit the field of PKE-SKL but also demonstrate a novel direction for their applications.

Below, we discuss some future directions.
\paragraph*{\bf A tighter analysis of the security} In this work, we bound the number of syndromes using that of the original code. We point out that a rigorous analysis of the number of syndromes of the shortened codes will give a tighter upper bound of the threshold $\delta$. However, this is more of interest in coding theory rather than quantum information.

\paragraph*{\bf Direct construction of IND-VRA secure PKE-SKL} In this work, we construct IND-VRA secure PKE-SKL by constructing OW-VRA secure PKE-SKL first and transforming the scheme later. This results in the drawback that the final protocol encrypts a one-bit message only, which makes the secret key grows linear to the length of the messages. We left open the problem of constructing an IND-VRA secure PKE-SKL with shorter keys. A potential solution to this problem is to construct an IND-VRA secure PKE-SKL directly, which reduces the security to the indistinguishable Certified Deletion Property in \cite{BI20}.

\paragraph*{\bf Error-tolerant SKL beyond PKE} Pseudorandom functions (PRFs) are another important cryptographic primitive. In \cite{KMY24}, they realized SKL for PRFs. One may think of the naive way as follows. The lessor and the lessee choose a common ECC $C$. When evaluating the output of the PRF, the lessor and lessee correct the raw output with $C$ and use the codeword as the final output. However, since the raw output of PRFs resembles a uniform random distribution, a small amount of error may change the raw output such that it is corrected to a completely different codeword. A solution to this problem may be using the Pseudorandom Codes (PRCs) instead of PRFs. SKL for digital signature uses the SKL for PRF as a component. Thus, it suffers from the same difficulty.

\bibliography{refs}

@PREAMBLE{
 "\providecommand{\noopsort}[1]{}" 
 # "\providecommand{\singleletter}[1]{#1}%" 
}

@book{IntroToModernCrypt,
  title={{Introduction to modern cryptography: principles and protocols}},
  author={Katz, Jonathan and Lindell, Yehuda},
  year={2007},
  publisher={Chapman and hall/CRC}
}

@misc{KN22,
      title={{Functional Encryption with Secure Key Leasing}}, 
      author={Fuyuki Kitagawa and Ryo Nishimaki},
      year={2022},
      eprint={2209.13081},
      archivePrefix={arXiv},
      primaryClass={quant-ph},
      url={https://arxiv.org/abs/2209.13081}, 
}

@misc{AKN+23,
      title={{Public Key Encryption with Secure Key Leasing}}, 
      author={Shweta Agrawal and Fuyuki Kitagawa and Ryo Nishimaki and Shota Yamada and Takashi Yamakawa},
      year={2023},
      eprint={2302.11663},
      archivePrefix={arXiv},
      primaryClass={quant-ph},
      url={https://arxiv.org/abs/2302.11663}, 
}

@misc{CGJL23,
      author = {Orestis Chardouvelis and Vipul Goyal and Aayush Jain and Jiahui Liu},
      title = {{Quantum Key Leasing for {PKE} and {FHE} with a Classical Lessor}},
      howpublished = {Cryptology {ePrint} Archive, Paper 2023/1640},
      year = {2023},
      url = {https://eprint.iacr.org/2023/1640}
}

@article{KNP25b,
  title={Collusion-resistant quantum secure key leasing beyond decryption},
  author={Kitagawa, Fuyuki and Nishimaki, Ryo and Pappu, Nikhil},
  journal={arXiv preprint arXiv:2510.04754},
  year={2025}
}

@misc{APV23,
      author = {Prabhanjan Ananth and Alexander Poremba and Vinod Vaikuntanathan},
      title = {{Revocable Cryptography from Learning with Errors}},
      howpublished = {Cryptology {ePrint} Archive, Paper 2023/325},
      year = {2023},
      url = {https://eprint.iacr.org/2023/325}
}

@misc{KMY24,
      title={{A Simple Framework for Secure Key Leasing}}, 
      author={Fuyuki Kitagawa and Tomoyuki Morimae and Takashi Yamakawa},
      year={2024},
      eprint={2410.03413},
      archivePrefix={arXiv},
      primaryClass={quant-ph},
      url={https://arxiv.org/abs/2410.03413}, 
}

@misc{PWY+25,
      author = {Duong Hieu Phan and Weiqiang Wen and Xingyu Yan and Jinwei Zheng},
      title = {{Adaptive Hardcore Bit and Quantum Key Leasing over Classical Channel from {LWE} with Polynomial Modulus}},
      howpublished = {Cryptology {ePrint} Archive, Paper 2025/099},
      year = {2025},
      doi = {https://doi.org/10.1007/978-981-96-0947-5_7},
      url = {https://eprint.iacr.org/2025/099}
}

@inproceedings{BJ24,
  title={Secret sharing with certified deletion},
  author={Bartusek, James and Raizes, Justin},
  booktitle={Annual International Cryptology Conference},
  pages={184--214},
  year={2024},
  organization={Springer}
}

@inproceedings{BVK+24,
  title={Software with certified deletion},
  author={Bartusek, James and Goyal, Vipul and Khurana, Dakshita and Malavolta, Giulio and Raizes, Justin and Roberts, Bhaskar},
  booktitle={Annual International Conference on the Theory and Applications of Cryptographic Techniques},
  pages={85--111},
  year={2024},
  organization={Springer}
}

@inproceedings{BKM+23,
  title={Weakening assumptions for publicly-verifiable deletion},
  author={Bartusek, James and Khurana, Dakshita and Malavolta, Giulio and Poremba, Alexander and Walter, Michael},
  booktitle={Theory of Cryptography Conference},
  pages={183--197},
  year={2023},
  organization={Springer}
}

@inproceedings{KNT23,
  title={Publicly verifiable deletion from minimal assumptions},
  author={Kitagawa, Fuyuki and Nishimaki, Ryo and Yamakawa, Takashi},
  booktitle={Theory of Cryptography Conference},
  pages={228--245},
  year={2023},
  organization={Springer}
}

@misc{BK22,
      author = {James Bartusek and Dakshita Khurana},
      title = {{Cryptography with Certified Deletion}},
      howpublished = {Cryptology {ePrint} Archive, Paper 2022/1178},
      year = {2022},
      url = {https://eprint.iacr.org/2022/1178}
}

@inbook{BI20,
   title={{Quantum Encryption with Certified Deletion}},
   ISBN={9783030643812},
   ISSN={1611-3349},
   DOI={10.1007/978-3-030-64381-2_4},
   booktitle={Theory of Cryptography},
   publisher={Springer International Publishing},
   author={Broadbent Anne and Islam Rabib},
   year={2020},
   pages={92–122} 
}

@misc{KNP25,
      author = {Fuyuki Kitagawa and Ryo Nishimaki and Nikhil Pappu},
      title = {{{PKE} and {ABE} with Collusion-Resistant Secure Key Leasing}},
      howpublished = {Cryptology {ePrint} Archive, Paper 2025/262},
      year = {2025},
      url = {https://eprint.iacr.org/2025/262}
}

@article{KRS09,
   title={The Operational Meaning of Min- and Max-Entropy},
   volume={55},
   ISSN={0018-9448},
   url={http://dx.doi.org/10.1109/TIT.2009.2025545},
   DOI={10.1109/tit.2009.2025545},
   number={9},
   journal={IEEE Transactions on Information Theory},
   publisher={Institute of Electrical and Electronics Engineers (IEEE)},
   author={Konig, Robert and Renner, Renato and Schaffner, Christian},
   year={2009},
   month=sep, pages={4337–4347} }

@article{Wie83,
  title={Conjugate coding},
  author={Wiesner, Stephen},
  journal={ACM Sigact News},
  volume={15},
  number={1},
  pages={78--88},
  year={1983},
  publisher={ACM New York, NY, USA}
}

@article{BL20,
  title={Uncloneable quantum encryption via oracles},
  author={Broadbent, Anne and Lord, S{\'e}bastien},
  journal={arXiv preprint arXiv:1903.00130},
  year={2019}
}

@inproceedings{Aar09,
  title={Quantum copy-protection and quantum money},
  author={Aaronson, Scott},
  booktitle={2009 24th Annual IEEE Conference on Computational Complexity},
  pages={229--242},
  year={2009},
  organization={IEEE}
}

@misc{TX25,
      title={Computational Certified Deletion Property of Magic Square Game and its Application to Classical Secure Key Leasing}, 
      author={Yuki Takeuchi and Duo Xu},
      year={2025},
      eprint={2510.04529},
      archivePrefix={arXiv},
      primaryClass={cs.CR},
      url={https://arxiv.org/abs/2510.04529}, 
}

@misc{KLYY25,
      author = {Fuyuki Kitagawa and Jiahui Liu and Shota Yamada and Takashi Yamakawa},
      title = {A Unified Approach to Quantum Key Leasing with a Classical Lessor},
      howpublished = {Cryptology {ePrint} Archive, Paper 2025/1871},
      year = {2025},
      url = {https://eprint.iacr.org/2025/1871}
}

@article{ZKL+23,
  title={Interactive cryptographic proofs of quantumness using mid-circuit measurements},
  author={Zhu, Daiwei and Kahanamoku-Meyer, Gregory D and Lewis, Laura and Noel, Crystal and Katz, Or and Harraz, Bahaa and Wang, Qingfeng and Risinger, Andrew and Feng, Lei and Biswas, Debopriyo and others},
  journal={Nature Physics},
  volume={19},
  number={11},
  pages={1725--1731},
  year={2023},
  publisher={Nature Publishing Group UK London}
}

\appendix

\section{Proof of \Cref{thm:CDP-BI20}}
\label{sec:modification-of-BI20}
In this section, we fix the flaw in the proof of \Cref{thm:CDP-BI20} by \cite{BI20}. 

Before we start our proof, we introduce (smooth) min-entropy and some useful lemmas.

\begin{definition}[Min-entropy]
    Let $\rho_{AB}$ be a bipartite state. The min-entropy is defined as follows:
    \begin{equation}
        H_{\min}(A|B)_\rho \coloneqq \sup \{s \in \mathbb R | \exists \sigma_{B}, \rho_{AB} \leq 2^{-s} 1_A \otimes \sigma_B  \}.
    \end{equation}
\end{definition}

\begin{lemma}
    \label{lem:min-entropy-diff}
    Let $\rho_{AB} = \sum_x P(x) \ket{x}\bra{x} \otimes \sigma_x$ be a bipartite state which $H_{\min}(A|B)_\rho = s$. Then, we consider the following state
    \begin{equation}
        \rho_{ARB} = \sum_x P(x) \ket{x}\bra{x} \otimes \ket{r_x} \bra{r_x} \otimes \sigma_x
    \end{equation}
    where $r \in R$. Then, we have
    \begin{equation}
        H_{\min}(A|RB)_\rho \geq s - \log |R|.
    \end{equation}
\end{lemma}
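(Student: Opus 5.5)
We argue directly from the definition of min-entropy by constructing a feasible operator for $\rho_{ARB}$. Let $s' < s$ be arbitrary, or take $s$ itself if the supremum is attained. By the definition of $H_{\min}(A|B)_\rho$ there is a state $\sigma_B$ with
\begin{equation}
\sum_x P(x) \ket{x}\bra{x} \otimes \sigma_x \leq 2^{-s'} 1_A \otimes \sigma_B .
\end{equation}
We propose the candidate $\tau_{RB} \coloneqq \frac{1}{|R|} 1_R \otimes \sigma_B$, a normalized state on $RB$. We will then show
\begin{equation}
\rho_{ARB} \leq 2^{-(s' - \log |R|)} 1_A \otimes \tau_{RB} = 2^{-s'} 1_A \otimes 1_R \otimes \sigma_B .
\end{equation}
Taking the supremum over $s'$ then gives $H_{\min}(A|RB)_\rho \geq s - \log|R|$.

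The key step is to prove the operator inequality above. Since $\ket{r_x}\bra{r_x} \leq 1_R$, each summand satisfies $P(x)\ket{x}\bra{x}\otimes\ket{r_x}\bra{r_x}\otimes\sigma_x \leq P(x)\ket{x}\bra{x}\otimes 1_R\otimes\sigma_x$. This uses the fact that tensoring a positive semidefinite difference with the positive operators $\ket{x}\bra{x}$ and $\sigma_x$ preserves positivity. Summing over $x$ gives $\rho_{ARB} \leq \rho_{AB}\otimes 1_R$, up to reordering of the tensor factors. Tensoring the assumed inequality $\rho_{AB} \leq 2^{-s'}1_A\otimes\sigma_B$ with $1_R$ preserves the ordering, so $\rho_{AB}\otimes 1_R \leq 2^{-s'}1_A\otimes 1_R\otimes\sigma_B$. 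Chaining the two inequalities yields the claim.

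The only point needing care is the bookkeeping. The factor $\log|R|$ comes solely from normalizing $1_R$ into the state $1_R/|R|$; the paper's $\sigma_B$ is required to be a normalized state, whereas other conventions allow it to be merely positive. If $r_x$ is a classical label, $\ket{r_x}$ is a basis vector and nothing more is needed. If the $\ket{r_x}$ are arbitrary unit vectors in a space of dimension $|R|$, the bound $\ket{r_x}\bra{r_x}\leq 1_R$ still holds, so the same argument goes through. I do not expect any real obstacle: the argument is the standard chain rule $H_{\min}(A|RB)\geq H_{\min}(A|B)-\log|R|$ specialized to a register correlated classically with $A$.
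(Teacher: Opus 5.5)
Your proof is correct, but it takes a different route from the paper's. The paper argues operationally. It invokes the König--Renner--Schaffner identity $p_{\sf guess}(A|B)_\rho = 2^{-H_{\min}(A|B)_\rho}$ and proceeds by contradiction: Bob replaces the missing register $R$ by a uniformly random guess $\frac{1}{|R|}\sum_r \ket{r}\bra{r}$ and applies the optimal measurement for $A$ given $RB$. Since the guess agrees with $r_x$ with probability $1/|R|$, this gives $p_{\sf guess}(A|B)\geq p_{\sf guess}(A|RB)/|R|$, which is the claim.

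You instead work directly from the defining operator inequality and exhibit the explicit witness $\tau_{RB}=\frac{1}{|R|}1_R\otimes\sigma_B$ via $\rho_{ARB}\leq\rho_{AB}\otimes 1_R$. The two arguments are dual views of the same fact: the maximally mixed state Bob prepares in the paper is exactly the $1_R/|R|$ factor in your witness.

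Your version has some advantages:
\begin{itemize}
\item It is self-contained, with no appeal to the guessing-probability characterization and no contradiction step.
\item It treats the supremum in the definition carefully.
\item It covers non-orthogonal $\ket{r_x}$ unchanged. The paper's phrase ``$r$ matches $x$ with probability $1/|R|$'' tacitly treats $r_x$ as a classical label, and its ``conditioned on matching'' step is stated only informally.
\end{itemize}
The paper's version, in exchange, conveys the cryptographic intuition that an adversary can always guess a classical leak such as the syndrome, at a multiplicative cost of $|R|$ in success probability.

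The only step you leave implicit is that feasibility in the definition is downward closed in $s$. This is immediate: if $\rho_{AB}\leq 2^{-s''}1_A\otimes\sigma_B$ with $s''>s'$, then also $\rho_{AB}\leq 2^{-s'}1_A\otimes\sigma_B$, because $1_A\otimes\sigma_B\geq 0$.
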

\begin{proof}
    We prove the lemma above using the operational characterization of $H_{\min}$. Let $p_{\sf guess}(A|B)_\rho$ be the maximum winning probability of the following game:
    \begin{itemize}
        \item Alice and Bob share the state $\rho_{AB}$ in advance.
        \item Alice measures the system $A$ in the computational basis and obtain the outcome $x$. Bob applies a POVM $\{E_{x^\prime}\}_{x^\prime}$ to the system $B$ and obtain the outcome $x^\prime$.
        \item Alice and Bob win iff $x = x^\prime$.
    \end{itemize}
    It is well known that \cite{KRS09}
    \begin{equation}
        p_{\sf guess}(A|B)_\rho = 2^{-H_{\min}(A|B)_\rho}.
    \end{equation}

    We assume $H_{\min}(A|RB) < s - \log |R|$. Then, we show a strategy for Alice and Bob to win with probability larger than $2^{-s}$, which implies that $H_{\min}(A|B) < s$. This contradicts the preassumption. We show the strategy below.
    \begin{itemize}
        \item Alice and Bob share the state $\rho_{AB}$ in advance.
        \item Alice measures the system $A$ in the computational basis and obtains the outcome $x$. 
        \item Bob prepares $\sum_{r \in R} \frac{1}{|R|} \ket{r}\bra{r}$ in register $R$. Bob applies a POVM $\{E_{x^\prime}\}_{x^\prime}$ to the system $RB$ and obtain the outcome $x^\prime$.
        \item Alice and Bob win iff $x = x^\prime$.
    \end{itemize}
    We note that $r$ matches $x$ with probability $1/|R|$. Conditioned on matching, $x = x^\prime$ with probability larger than $2^{-s+\log |R|}$. Thus, the winning probability for the strategy is larger than $2^{-s}$. By the operational characterization of min-entropy, we can see that $H_{\min}(A|B) < s$. This completes the proof as described aforementioned.
     
\end{proof}
Then, we define the purified distance and the smooth min-entropy as in \cite{BI20}.
\begin{definition}
    Let $\rho_A$ and $\sigma_A$ be two (subnormalized) states. We define the generalized fidelity as follows.
    \begin{equation}
        F(\rho_A, \sigma_A) \coloneqq (\Tr[\sqrt{\sqrt \rho_A \sigma_A \sqrt \rho_A } ] + \sqrt{1 - \Tr \rho_A} \sqrt{1 - \Tr \sigma_A} )^2.
    \end{equation}
    Then, the purified distance is defined as 
    \begin{equation}
        P(\rho_A, \sigma_A) \coloneqq \sqrt{1 - F(\rho_A, \sigma_A)}.
    \end{equation}
\end{definition}
\begin{definition}
    Let $\rho_{AB}$ be a bipartite state. Let $\epsilon \in [0,\Tr \rho_{AB}]$. We define 
    \begin{equation}
        H_{\min}^\epsilon(A|B)_{\rho_{AB}} = \sup_{\tilde \rho_{AB}, P(\rho_{AB},\tilde \rho_{AB}) \leq \epsilon} H_{\min} (A|B)_{\tilde \rho_{AB}}.
    \end{equation}
\end{definition}

First, we remind the reader of the security game in \Cref{thm:CDP-BI20}.
\begin{table*}[htbp]
    \centering
    \begin{tabular}{c|p{8cm}}
       Parameter  &  How it is used\\
       $\Theta$ & The set of possible basis $\theta$\\
       $n$  & The number of qubits in the Hadamard basis \\
       $m$  & The number of qubits in the computational basis \\
       $C$ & The linear ECC of which the codewords are $m$ bits \\
       $\delta$ & A real number falls in $[0,1/2]$ which indicates the tolerance of the verification process
    \end{tabular}
    \caption{In this table, we explain the notations used in the analysis of the security game of the Certified Deletion game.}
    \label{tab:parameters-in-CD}
\end{table*}

\paragraph*{Prepare-and-Measure game} 
\begin{enumerate}
    \item The challenger samples $\theta \in \Theta$ and $x \in \{0,1\}^{n+m}$ uniformly at random. Then, the challenger computes the syndrome $s$ as follows:
    \begin{itemize}
        \item Let ${\cal I} \coloneqq \{i\in [l_n] | \theta[i] = 1\}$ be the indices in which the qubit is encoded in the Hadamard basis. Note that $l_n:=n+m$.
        \item The challenger computes $s = {\sf Synd}(x_{|\cal \bar I})$, which is the syndrome of $x_{|\cal \bar I}$ with ECC $C$. $x_{|\cal \bar I}$ is the part of $x$ which will be encoded in the computational basis.
    \end{itemize}
    The challenger sends $\ket{x}_{\theta}$ to $A_0$.
    \item $A_0$ outputs a classical string $x^* \in \{0,1\}^{n+m}$ and a state $st$.
    \item If $hw(x^*_{|\cal I} \oplus x_{|\cal I}) \geq n \delta$, the challenger aborts the game, and the adversary loses.
    \item The challenger sends $\theta, x_{|\cal I}, s$,  and $st$ to $A_1$. $A_1$ outputs a classical string $x^\prime \in \{0,1\}^{n+m}$.
    \item The adversary wins if $x^\prime_{|\cal \bar I} = x_{|\cal \bar I}$. Otherwise, it loses.
\end{enumerate}

Then, we define a purified version of the game.
\paragraph*{Entangled Game}
\begin{enumerate}
    \item The challenger prepares $n + m$ EPR pairs in the registers $A_1 B_1 \dots A_{n+m} B_{n+m}$. Then, the challenger sends the registers $B_1 \dots B_{n+m}$ to the adversary.
    \item $A_0$ outputs a classical string $x^* \in \{0,1\}^{n+m}$ and a state $st$.
    \item The challenger samples $\theta \in \Theta$ uniformly at random and measures the register $A_1 \dots A_{n+m}$ according to $\theta$. Let the measurement outcome be $x \in \{0,1\}^{n+m}$. The challenger computes $err$ as follows:
    \begin{itemize}
        \item Let ${\cal I} \coloneqq \{i\in [l_n] | \theta[i] = 1\}$ be the indices in which the qubit is encoded in the Hadamard basis. 
        \item The challenger computes $s = {\sf Synd}(x_{|\cal \bar I})$, which is the syndrome of $x_{|\cal \bar I}$ with ECC $C$. $x_{|\cal \bar I}$ is the part of $x$ which will be encoded in the computational basis.
    \end{itemize}
    \item If $hw(x^*_{|\cal I} \oplus x_{|\cal I}) \geq n \delta$, the challenger aborts the game, and the adversary loses.
    \item The challenger sends $\theta, x_{|\cal I}, s$  and $st$ to $A_1$. $A_1$ outputs a classical string $x^\prime \in \{0,1\}^{n+m}$.
    \item The adversary wins if $x^\prime_{|\cal \bar I} = x_{|\cal \bar I}$. Otherwise, it loses.
\end{enumerate}

We prove the following lemma to state the relation between the security and the ECC.
\begin{lemma}
    \label{lem:cd-lemma}
    Let $A=(A_0, A_1)$ be any (unbounded) adversary. Let $S$ be the set of all possible syndromes. The probability that the adversary wins in the entangled game is at most $\frac{1}{2}\sqrt{2^{-m(1-h(\delta + \nu))+\log |S| + 1}} + 2\epsilon(\nu) $, where $\epsilon(\nu) = \exp(\frac{-m n^2 \nu^2}{(n+m)(n+1)})$. In our certified deletion property (see \Cref{thm:asym-threshold}), only one-way security is required instead of indistinguishability. Thus, the probability is $\frac{1}{2}\sqrt{2^{-m(1-h(\delta + \nu))+\log |S| + 1}} + 2\epsilon(\nu)$ instead of $\frac{1}{2}+\frac{1}{2}\sqrt{2^{-m(1-h(\delta + \nu))+\log |S| + 1}} + 2\epsilon(\nu) $. The probability is negligible for sufficiently large $n,m$. 
\end{lemma}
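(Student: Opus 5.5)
I would follow the Broadbent--Islam strategy for the entangled game and add one step that accounts for the syndrome. The first step is to check that, from the adversary's view, the entangled game and the prepare-and-measure game are identical. The adversary's operation producing $(x^*, st)$ acts only on the $B$ registers, so it commutes with the challenger's measurement of the $A$ registers. Measuring half of an EPR pair in basis $\theta$ leaves the other half in $\ket{x}_\theta$ with $x$ uniform. Hence it suffices to bound the winning probability in the entangled game, and the adversary's choice of $x^*$ is fixed before $\theta$ is chosen.

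The second step is the sampling argument that produces $\epsilon(\nu)$. The positions ${\cal I}$ form a uniformly random $n$-subset of the $n+m$ positions, and $x^*$ is fixed independently of this choice. Suppose the challenger counted the errors of $x^*$ on ${\cal I}$ in the Hadamard basis. Serfling/Hoeffding sampling without replacement then shows the following: except with probability $\epsilon(\nu)=\exp(-mn^2\nu^2/((n+m)(n+1)))$, if fewer than $\delta n$ errors occur on ${\cal I}$, then a hypothetical Hadamard measurement on $\bar{\cal I}$ would show at most $(\delta+\nu)m$ errors relative to $x^*_{|\bar{\cal I}}$. This yields a state $\tilde\rho$, within purified distance $\sqrt{2\epsilon(\nu)}$ or so of the real post-test state, in which the $\bar{\cal I}$ registers lie in a Hadamard-basis Hamming ball of radius $(\delta+\nu)m$ around $x^*$. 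The ball has size at most $2^{mh(\delta+\nu)}$.

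The third step applies the entropic uncertainty relation to this smoothed state. A Hadamard-basis ball of size $2^{mh(\delta+\nu)}$ forces the min-entropy of the computational-basis outcome $x_{|\bar{\cal I}}$, conditioned on $st$ together with $\theta$ and $x_{|\cal I}$ (which are independent of it), to satisfy $H^{\epsilon'}_{\min}(X_{\bar{\cal I}}\mid E)\ge m(1-h(\delta+\nu))$. This is the Tomamichel--Renner style bound used in \cite{BI20}. Next comes the point missed in \cite{BI20}: $A_1$ also receives $s={\sf Synd}(x_{|\bar{\cal I}})$, a deterministic function of $x_{|\bar{\cal I}}$ that takes values in $S$. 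By \Cref{lem:min-entropy-diff}, adding this classical register costs at most $\log|S|$ bits, so $H^{\epsilon'}_{\min}(X_{\bar{\cal I}}\mid E S)\ge m(1-h(\delta+\nu))-\log|S|$. To finish, I would use $p_{\sf guess}=2^{-H_{\min}}$ on the smoothed state, together with the fact that guessing probabilities differ by at most the trace distance, which is bounded by the purified distance. Absorbing constants (the $\tfrac12\sqrt{2^{\cdots+1}}$ form comes from the leftover-hash-style expression inherited from \cite{BI20}, whose square-root bound dominates $2^{-H_{\min}}$ when the exponent is negative) gives the stated bound $\tfrac12\sqrt{2^{-m(1-h(\delta+\nu))+\log|S|+1}}+2\epsilon(\nu)$. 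This bound concerns one-way guessing of $x_{|\bar{\cal I}}$, so no $\tfrac12$ indistinguishability term appears.

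I expect the main obstacle to be the smoothing step. The cleanest route would be to bound directly the probability that the test passes while the hypothetical Hadamard error count on $\bar{\cal I}$ exceeds $(\delta+\nu)m$, and then convert that joint event into a purified-distance bound on the conditional state. If that conversion is too lossy, I would avoid smoothing entirely. Instead I would bound the winning probability by the probability of this bad event plus the winning probability restricted to the good event, and apply the uncertainty relation directly to the projected subnormalized state.
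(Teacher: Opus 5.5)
Your route is the paper's route. The paper imports $H^{\epsilon}_{\min}(X|VWBS_\theta)_\rho\ge m(1-h(\delta+\nu))$ from Proposition 5.6 of \cite{BI20}, which packages your sampling and uncertainty-relation steps. It then subtracts $\log|S|$ via \Cref{lem:min-entropy-diff} exactly as you do, and reruns Corollary 5.8 of \cite{BI20} without privacy amplification.

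The step that fails as written is the smoothing bookkeeping. Serfling's bound for a uniformly random $n$-subset of the $n+m$ positions gives a tail of $\exp(-2\nu^2 mn^2/((n+m)(n+1)))=\epsilon(\nu)^2$, not $\epsilon(\nu)$. With your tail of $\epsilon(\nu)$, gentle measurement gives a purified distance of order $\sqrt{\epsilon(\nu)}$. An additive loss of $\sqrt{2\epsilon(\nu)}$ cannot be absorbed into $2\epsilon(\nu)$.

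You also have to say which post-test state you smooth. If you condition on passing, the smoothing parameter is $\epsilon(\nu)/\sqrt{p_{\mathrm{pass}}}$; this is why the imported proposition assumes $\epsilon(\nu)^2<\Pr[\text{pass}]$. You must then reweight the conditional guessing probability by $p_{\mathrm{pass}}$. With $k:=m(1-h(\delta+\nu))-\log|S|$, this gives $\Pr[\text{win}]\le 2^{-k}+\sqrt{p_{\mathrm{pass}}}\,\epsilon(\nu)\le 2^{-k}+\epsilon(\nu)$. Alternatively, work with the subnormalized post-test state throughout.

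Your fallback does not work as stated. Bounding the win probability by $\Pr[\text{bad}]$ plus the win probability on the good event treats both as events in one experiment. But the bad event is defined by a hypothetical Hadamard-basis measurement of the $\bar{\cal I}$ registers, and that measurement is incompatible with the computational-basis measurement that decides a win. So the cross terms do not vanish. You have to go through a trace-distance (gentle-measurement) bound costing $\sqrt{\Pr[\text{bad}]}$, which is again why you need the $\epsilon(\nu)^2$ tail.

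Your comparison with the $\tfrac12\sqrt{2^{1-k}}$ form is correct for $k\ge 1$, and the bound is trivial for $k\le -1$. For $-1<k<1$, neither your argument nor the paper's one-line appeal to Corollary 5.8 produces that exact expression. This is harmless for \Cref{thm:asym-threshold}, where $k$ grows linearly in $L_{out}$. But what you actually prove for one-way guessing is $2^{-k}+O(\epsilon(\nu))$.
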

Since the entangled game is the purified version of the prepare-and-measure game, we prove the security of the prepare-and-measure game at the same time.

We present the proof below.
\begin{proof}
    We prove the lemma by bounding the min-entropy between $x_{|\cal \bar I}$
    and the adversary's internal register, as \cite{BI20} did. However, they did not take into account the loss introduced by sending the syndrome $s$ to the adversary in Step $5$. We fix the flaw and present the proof below.

    First, we state some additional notations. In Step 3, the challenger stores $x_{|\cal I}$ in register $V$ and $x_{|\cal \bar I}$ in register $X$. Seemingly, we divide $x^*$ into $x^*_{|\cal I}$ and $x^*_{|\cal \bar I}$. The adversary stores $x^*_{|\cal I}$ in register $Y$ and $x^*_{|\cal \bar I}$ in register $W$. In addition, we assume the information about $\theta$ is stored in register $S_\theta$.

    Let $\rho_{AYVWBS_\theta}$ be the overall state before the challenger obtains $x_{|\cal I}$, conditioned on the adversary passing the test in Step 4. The state $\rho_{XYVWBS_\theta}$ is obtained by measuring $x_{|\cal \bar I}$ and storing the outcome in register $V$. Then, we import the following lemma from \cite{BI20}
    \begin{lemma}[Proposition 5.6 from \cite{BI20}]
        For any $\nu \in (0, \frac{1}{2}-\delta]$ such that $\epsilon(\nu)^2 < \Pr[\text{The adversary passes the test in Step 4}]$, 
        \begin{equation}
            H_{\min}^\epsilon(X|VWBS_\theta)_\rho \geq m(1-h(\delta + \nu))
        \end{equation}
        where 
        \begin{eqnarray}
            \epsilon(\nu) &\coloneqq& \exp(\frac{-mn^2\nu^2}{(m+n)(n+1)})\\
            h(x) &\coloneqq& -x\log x - (1-x) \log (1-x).
        \end{eqnarray}
    \end{lemma}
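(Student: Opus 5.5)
The plan is the standard route of \cite{BI20}: an entropic uncertainty relation for complementary measurements, combined with a classical sampling bound. I work in the Entangled Game. The key observation is that the adversary must commit to $x^*$ (registers $Y,W$) before $\theta$ is chosen. So at that moment the challenger's $n+m$ halves of the EPR pairs are untouched, and which positions get measured in which basis is still undetermined.

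\textbf{Step 1: hypothetical Hadamard measurement.} Consider a modified experiment in which the challenger also measures the $m$ qubits in $\bar{\cal I}$ in the Hadamard basis rather than the computational basis. Call the outcome $\hat x_{|\bar{\cal I}}$. In this experiment all $n+m$ positions are measured in the Hadamard basis and compared with $x^*$. Because $\theta$ is a uniformly random subset of size $n$ chosen after $x^*$ is fixed, the test positions ${\cal I}$ form a uniform random sample of the $n+m$ positions.

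\textbf{Step 2: sampling bound.} Apply the Serfling-type bound for sampling without replacement, as in \cite{BI20}. It gives the following: except with probability at most $\epsilon(\nu)^2=\exp\!\left(\frac{-2mn^2\nu^2}{(n+m)(n+1)}\right)$, if the error rate on ${\cal I}$ is below $\delta$ then the relative Hamming distance between $\hat x_{|\bar{\cal I}}$ and $x^*_{|\bar{\cal I}}$ is below $\delta+\nu$. The constraint $\nu\le\frac12-\delta$ keeps $\delta+\nu\le 1/2$, so $h$ is monotone on the relevant range.

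\textbf{Step 3: bounding the max-entropy.} Condition on passing the test in Step 4. Replace the true conditional state by the nearby state in which the bad sampling event has been removed. Under the purified distance, this state is $\epsilon$-close to the conditioned state with $\epsilon=\epsilon(\nu)/\sqrt{\Pr[\text{pass}]}$. This renormalisation is where the hypothesis $\epsilon(\nu)^2<\Pr[\text{pass}]$ is needed: it keeps the smoothing parameter within the admissible range. On the good state, given $W=x^*_{|\bar{\cal I}}$, the Hadamard outcome $\hat x_{|\bar{\cal I}}$ lies in a Hamming ball of radius $(\delta+\nu)m$. That ball has at most $2^{m h(\delta+\nu)}$ elements, so $H^{\epsilon}_{\max}(\hat X|W S_\theta V)\le m\,h(\delta+\nu)$.

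\textbf{Step 4: uncertainty relation.} Apply the smooth entropic uncertainty relation of Tomamichel--Renner to the $m$ qubits in $\bar{\cal I}$. The computational and Hadamard bases there have overlap $c=1/2$ per qubit. The relation gives
\[
H^{\epsilon}_{\min}(X|VWBS_\theta)+H^{\epsilon}_{\max}(\hat X|WVS_\theta)\ \ge\ m .
\]
Combining this with Step 3 yields the claimed bound $m(1-h(\delta+\nu))$.

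\textbf{Main obstacle.} The delicate step is Step 4. The uncertainty relation must be applied with the correct conditioning systems: $B$ (the adversary's quantum state $st$) appears on the min-entropy side, and only the classical $W,V,S_\theta$ appear on the max-entropy side. This is legitimate because $x^*$ is classical and fixed before the measurement, and $x_{|\cal I}$ is independent of the choice of basis on $\bar{\cal I}$. I would first check that conditioning on the pass event commutes with choosing the basis on $\bar{\cal I}$; it should, because the test only involves positions in ${\cal I}$. I would then carry the smoothing parameter through the conditioning as in Step 3.
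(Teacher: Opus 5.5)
Your proposal is correct and takes the same approach as the source of this lemma. The paper gives no proof of its own; it imports the result from \cite{BI20}, where Proposition 5.6 rests on the same two ingredients you use: the Tomamichel--Renner smooth-entropy uncertainty relation on the $m$ positions in $\bar{\cal I}$, and a Serfling-type sampling bound applied to a hypothetical Hadamard measurement of those positions. Your reading of the unstated smoothing parameter as $\epsilon(\nu)/\sqrt{\Pr[\text{pass}]}$, which is where the hypothesis $\epsilon(\nu)^2<\Pr[\text{pass}]$ enters, also matches the $2\epsilon(\nu)$ term that appears later in \Cref{thm:CDP-BI20}.
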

    By \Cref{lem:min-entropy-diff}, we obtain the following lemma, which states the min-entropy after the adversary receives the information about the syndrome.
    \begin{lemma}
        \begin{equation}
            H_{\min}^\epsilon(X|VWBS_\theta {O})_\rho \geq m(1-h(\delta + \nu)) - \log |S|,
        \end{equation}
    where the information about the syndrome is stored in $O$. 
    \end{lemma}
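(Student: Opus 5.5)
The plan is to reduce the statement to \Cref{lem:min-entropy-diff}, applied not to $\rho$ itself but to the optimal smoothed state from Proposition 5.6 of \cite{BI20}. Write $E \coloneqq VWBS_\theta$ for the adversary-side registers and $s_0 \coloneqq m(1-h(\delta+\nu))$. The two facts to use are that the syndrome register $O$ holds a deterministic function of the classical register $X$, namely $o={\sf Synd}(x_{|\bar{\cal I}})$, and that $O$ takes at most $|S|$ values.

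First, fix the smoothing. By the imported Proposition 5.6, for every $\eta>0$ there is a (subnormalized) $\tilde\rho_{XE}$ with $P(\rho_{XE},\tilde\rho_{XE})\le\epsilon$ and $H_{\min}(X|E)_{\tilde\rho}\ge s_0-\eta$. I would then make $\tilde\rho$ classical on $X$ by applying the pinching (dephasing) channel on $X$ in the computational basis. This does not increase the purified distance to $\rho$, because $\rho$ is already classical on $X$ and is left unchanged. It also does not decrease $H_{\min}(X|E)$, because the channel is unital and acts only on $X$. Hence we may assume $\tilde\rho_{XE}=\sum_x |x\rangle\langle x|\otimes\tilde\sigma_x$. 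Next, define $\tilde\rho_{XOE}$ by applying to $\tilde\rho$ the same classical copying map $|x\rangle\mapsto|x\rangle|{\sf Synd}(x)\rangle$ that produces $\rho_{XOE}$ from $\rho_{XE}$. Since this map is an isometry, and in particular CPTP, monotonicity of the purified distance gives $P(\rho_{XOE},\tilde\rho_{XOE})\le\epsilon$. So $\tilde\rho_{XOE}$ is an admissible candidate in the supremum defining $H^\epsilon_{\min}(X|EO)_\rho$.

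It remains to bound $H_{\min}(X|EO)_{\tilde\rho}$. This is exactly \Cref{lem:min-entropy-diff} with $R=O$ and $r_x={\sf Synd}(x)$. I would also record the direct operator argument, which avoids relying on the guessing-game proof. Let $\sigma_E$ satisfy $\tilde\rho_{XE}\le 2^{-(s_0-\eta)}1_X\otimes\sigma_E$. Then
\begin{equation}
\tilde\rho_{XOE}=\sum_x |x\rangle\langle x|\otimes|r_x\rangle\langle r_x|\otimes\tilde\sigma_x \le \tilde\rho_{XE}\otimes 1_O \le 2^{-(s_0-\eta-\log|S|)}\,1_X\otimes\Big(\sigma_E\otimes\tfrac{1_O}{|S|}\Big),
\end{equation}
where $\sigma_E\otimes 1_O/|S|$ is a valid state (a product of normalized states) and the tensor ordering of $E$ and $O$ is immaterial. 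This gives $H_{\min}(X|EO)_{\tilde\rho}\ge s_0-\eta-\log|S|$. Letting $\eta\to 0$ yields the claimed bound $H^\epsilon_{\min}(X|EO)_\rho\ge m(1-h(\delta+\nu))-\log|S|$.

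The main obstacle is the smoothing step rather than the entropy loss. One must make sure that the smoothed state can be taken classical on $X$, and that attaching $O$ to $\tilde\rho$ is legitimate, i.e.\ that $\tilde\rho_{XOE}$ stays within purified distance $\epsilon$ of the true $\rho_{XOE}$. Both follow from monotonicity of the purified distance under CPTP maps that fix $\rho$'s classical structure. For that reason I would carry out these two steps explicitly before invoking the operator inequality.
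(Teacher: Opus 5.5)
Your proposal is correct and follows the paper's route: the paper derives this bound in one line by applying \Cref{lem:min-entropy-diff} with $R=O$ and $r_x={\sf Synd}(x)$. What you add is the justification the paper leaves implicit. That lemma is stated for the non-smooth min-entropy of a cq-state, and you show it transfers to $H_{\min}^\epsilon$ by dephasing a near-optimal smoothed state on $X$ and attaching $O$ isometrically, so that it stays within purified distance $\epsilon$ of $\rho_{XOE}$. Your operator inequality $\tilde\rho_{XOE}\le\tilde\rho_{XE}\otimes 1_O$ is also a cleaner substitute for the guessing-game proof of that lemma.
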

    Then, we prove the security of the entangled game, with almost the same argument that \cite{BI20} proves Corollary 5.8. The difference is that we do not perform the privacy amplification. This is because our certified deletion property requires only one-way security instead of indistinguishability.
    
\end{proof}

\section{Formal Definitions and Proofs}
\label{sec:formal}
\begin{definition}[The definition of PKE with IND-CPA security~\cite{IntroToModernCrypt}]
    \label{dfn:IND-CPA-PKE}
    Let $\lambda$ be the security parameter. A public-key encryption (PKE) scheme is a tuple of algorithms $({\sf PKE.KG}, {\sf PKE.Enc}, {\sf PKE.Dec})$ as follows:
    \paragraph*{${\sf PKE.KG}(1^\lambda) \rightarrow ({\sf pk}, {\sf sk})$} is a PPT algorithm which takes as input the security paramter. It outputs a pair of secret key {\sf sk} and public key {\sf pk}.
    \paragraph*{${\sf PKE.Enc}({\sf pk}, b)\rightarrow {\sf ct}_b$} is a PPT algorithm. The algorithm takes as input a public key ${\sf pk}$ and a single bit $b \in \{0,1\}$. It outputs the ciphertext ${\sf ct}_b$.
    \paragraph*{${\sf PKE.Dec}({\sf sk}, {\sf ct}_b) \rightarrow b^\prime$} is a QPT algorithm. The algorithm takes as input a secret key ${\sf sk}$ and a ciphertext ${\sf ct}_b$. It outputs a single bit $b^\prime$ as the decryption result.

    The PKE scheme must satisfy the correctness as follows.
    \paragraph*{\bf correctness:} For $b \in \{0,1\}$, we have
    \begin{equation}
        \Pr[
        b \neq b^\prime:
        \begin{aligned}
            &({\sf pk}, {\sf sk}) \leftarrow {\sf PKE.KG}(1^\lambda) \\
            & {\sf ct}_b \leftarrow {\sf PKE.Enc}({\sf pk}, b)\\
            & b^\prime \leftarrow {\sf PKE.Dec}({\sf sk}, {\sf ct}_b)\\
        \end{aligned}
        ] = {\rm negl}(\lambda){.}
    \end{equation}
    Then, the PKE scheme must satisfy IND-CPA security as well.
    \paragraph*{\bf IND-CPA security:} For any QPT adversary $A^\lambda$
    \begin{equation}
        \begin{aligned}
            |&
        \Pr[A^\lambda({\sf pk}, {\sf ct}_0)=1:\begin{aligned}
            &({\sf pk}, {\sf sk}) \leftarrow {\sf PKE.KG}(1^\lambda) \\
            &ct_0 \leftarrow {\sf PKE.Enc}({\sf pk}, 0) \\
            &ct_1 \leftarrow {\sf PKE.Enc}({\sf pk}, 1) \\
        \end{aligned}] \\
        &- \Pr[A^\lambda({\sf pk}, {\sf ct}_1)=1:\begin{aligned}
            &({\sf pk}, {\sf sk}) \leftarrow {\sf PKE.KG}(1^\lambda) \\
            &ct_0 \leftarrow {\sf PKE.Enc}({\sf pk}, 0) \\
            &ct_1 \leftarrow {\sf PKE.Enc}({\sf pk}, 1) \\
        \end{aligned}]
        | = {\rm negl}(\lambda){.}
        \end{aligned}
    \end{equation}
\end{definition}

We use an $\text{IND-CPA}$ secure classical $\text{PKE}$ scheme for single-bit messages $\text{PKE} = (\text{PKE.\cal KG}, \text{PKE.\sf Enc}, \text{PKE.\sf Dec})$ as the building block. Then, we state the parameters used in the protocol. Let $\lambda$ be the security parameter, $l_m = \omega(\log \lambda)$ be the message length, and $\ell = \ell(\lambda)$ be the length of a decryption key generated by $\text{PKE.\cal KG}(1^\lambda)$. {\bf Let $C: \{0,1\}^{l_m} \to \{0,1\}^{l_n}$ be a linear ECC}. Let $\delta\in [0,1/2]$ and $\epsilon \in [0,1]$ be arbitrary constants. We construct $\text{PKESKL}$ = (${PKESKL.\cal KG}$, ${PKESKL.\sf Enc}$, ${PKESKL.Dec}$, ${PKESKL.Del}$, ${PKESKL.\sf DelVrfy}$) for $l_m$-bit messages.

\paragraph*{PKESKL.${\cal KG}$$(1^\lambda) \to (\text{\sf ek}, \text{dk}, \text{\sf dvk})$} 
\begin{enumerate} 
\item Sample $x, \theta \in\{0, 1\}^{l_n}$ where $\theta$ has Hamming weight exactly $\epsilon l_n$ uniformly at random. 
\item Generate $(\text{\sf PKE.ek}_{i,b}, \text{\sf PKE.dk}_{i,b}) \leftarrow \text{\sf PKE.KG}(1^\lambda)$ for all $i \in [l_n]$ and $b \in \{0, 1\}$.
\item Generate the quantum state $\ket{dk_i}$ for $i \in [l_n]$ based on $\theta[i]$ and $x[i]$ as follows:
\begin{equation}
    \begin{aligned}
        &\ket{x[i]}\ket{\text{\sf PKE.dk}_{i,x[i]}}  \quad \text{if } \theta[i] = 0 \\
        & \frac{1}{\sqrt{2}} \left( \ket{0} \ket{{\sf PKE.dk}_{i,0}}  + (-1)^{x[i]} \ket{1} \ket{{\sf PKE.dk}_{i,1}} \right)  \quad \text{if } \theta[i] = 1.
    \end{aligned}
\end{equation}
\item Output the keys:
    \begin{itemize}
    \item Encryption key (classical): $\text{\sf ek} := (\text{\sf PKE.ek}_{i,b})_{i \in [l_n], b \in \{0, 1\}}$.
    \item Decryption key (quantum): $dk := \ket{dk_1} \otimes \cdots \otimes \ket{dk_{l_n}}$. 
    \item Delete verification key (classical): ${\sf dvk} \coloneqq (\theta, \{x_i\}_{i \in [l_n]:\theta[i]=1},\{dk_{i,b}\}_{i\in[l_n]:\theta[i]=1,b\in\{0,1\}})$.
    \end{itemize} 
\end{enumerate}
    
\paragraph*{PKESKL.{\sf Enc}$({\sf ek}, m) \to \text{ct}$} 
\begin{enumerate} 
    \item Parse $\text{\sf ek} = (\text{\sf PKE.ek}_{i,b})_{i \in [l_n], b \in \{0, 1\}}$. 
    \item Compute $s = C(m) $.
    \item For $i \in [l_n]$ and $b \in \{0, 1\}$, generate $\text{PKE.ct}_{i,b} \leftarrow \text{PKE.\sf Enc}(\text{\sf PKE.ek}_{i,b}, s[i])$. 
    \item Output the ciphertext $\text{ct} $ $\coloneqq$ ($\textsf{PKE.ct}_{i,b}$)$_{i \in [l_n], b \in \{0, 1\}}$. 
\end{enumerate}

\paragraph*{PKESKL.Dec$(\text{dk}, \text{ct}) \to m'$} 
\begin{enumerate} 
    \item Parse $dk = \ket{dk_1} \otimes \cdots \otimes \ket{dk_{l_n}}$. 
    \item For each $i \in [l_n]$, we define the unitary $D_i$ on register ${\sf X}_i, {\sf PKE.DK}_i, {\sf OUT}_i$ as follows: 
    \begin{widetext}
        $$ |u\rangle | \text{PKE.dk} \rangle |v\rangle \quad \overset{D_i}{\mapsto} \quad |u\rangle | \text{PKE.dk} \rangle \mid v \oplus \text{PKE.Dec}(\text{PKE.dk}, \textsf{PKE.ct}_{i,u}) \rangle $$ 
    \end{widetext}
    where $u, v \in \{0, 1\}$ and $\text{PKE.dk} \in \{0, 1\}^\ell$. 
    \item Apply $D_i$ to $|\text{dk}_i\rangle \otimes |0\rangle$ and measure the rightmost register to obtain the outcome $s'_i$. 
    \item Let $s = s'_1 || \dots || s'_{l_n}$. We compute $m'$ by decoding $s$.
    \item Output $m'$.
\end{enumerate}

\paragraph*{PKESKL.Del$(\text{dk}) \to \text{\sf cert}$} 
\begin{enumerate} 
    \item Parse $dk = \ket{dk_1} \otimes \cdots \otimes \ket{dk_{l_n}}$. 
    \item For every $i \in [l_n]$, measure every bit of $\ket{dk_i}$ in the Hadamard basis to obtain an outcome $(e_i, d_i) \in \{0, 1\} \times \{0, 1\}^\ell$.
    \item Output $\text{\sf cert} := (e_i, d_i)_{i \in [l_n]}$. 
\end{enumerate}

\paragraph*{PKESKL.{\sf DelVrfy}$(\text{\sf dvk}, \text{\sf cert}) \to \top/\perp$}
\begin{enumerate} 
    \item Parse $\text{\sf dvk}$  = ($\{x[i]\}_{i \in [l_n]: \theta[i]=1}$, $\theta$, $\{\text{\sf PKE.dk}_{i,0}$, $\text{\sf PKE.dk}_{i,1}\}_{i \in [l_n]: \theta[i]=1}$) and $\text{\sf cert} = (e_i, d_i)_{i \in [l_n]}$.
    \item Output $\bot$ if the number of $i \in [l_n]$ that satisfies the condition below and $\theta[i]=1$ is geater or equal to $\delta l_{H}$ where $l_{H} = \epsilon l_{n}$:
    $$e_i \neq x[i] \oplus d_i \cdot (\text{\sf PKE.dk}_{i,0} \oplus \text{\sf PKE.dk}_{i,1}){.}$$
    \item Otherwise, output $\top$. 
\end{enumerate}
For those who are unfamiliar with the notations, we explain that $\bot$ means the algorithm rejects the certificate and $\top$ means the algorithm accepts the certificate.

\begin{definition}[IND-VRA Security \cite{KMY24}] We say that a $\text{PKE-SKL}$ scheme is $\text{IND-VRA}$ secure if it satisfies the following requirement, formalized by the experiment $\text{\sf Exp}^{\text{ind-vra}}_{\text{PKESKL}, \mathcal{A}}(1^\lambda, \text{coin})$ between a QPT adversary $\mathcal{A}$ and the challenger:
\begin{enumerate} 
\item The challenger runs $(\text{\sf ek}, dk, \text{\sf dvk}) \leftarrow \mathcal{KG}(1^\lambda)$ and sends $\text{\sf ek}$ and $dk$ to $\mathcal{A}$. 
\item $\mathcal{A}$ sends $\text{\sf cert}$. If $\text{\sf DelVrfy}(\text{\sf dvk}, \text{\sf cert}) = \bot$, the experiment outputs $0$ and $1$ with even probability. Otherwise, the challenger sends $\text{\sf dvk}$ to ${\cal A}$. 
\item ${\cal A}$ sends $(m_0, m_1) \in \mathcal{M}^2$ to the challenger. 
\item The challenger generates $\text{ct}^* \leftarrow \text{\sf Enc}(\text{\sf ek}, m_{\text{coin}})$ and sends it to ${\cal A}$.
\item $\mathcal{A}$ outputs a guess $\text{coin}'$ for $\text{coin}$. We use $\text{coin}^\prime$ as the output of the experiment.
 \end{enumerate}
For any $\text{QPT}$ adversary $\mathcal{A}$, the advantage is negligible ($\leq \text{\rm negl}(\lambda)$): 
$$ 
\begin{aligned}
    \text{Adv}^{\text{ind-vra}}_{\text{PKESKL}, \mathcal{A}}(\lambda) := |\Pr \left[ \text{Exp}^{\text{ind-vra}}_{\text{PKESKL}, \mathcal{A}}(1^\lambda, 0) = 1 \right] - \\
    \Pr \left[ \text{Exp}^{\text{ind-vra}}_{\text{PKESKL}, \mathcal{A}}(1^\lambda, 1) = 1 \right]| = \text{\rm negl}(\lambda).    
\end{aligned}
$$ 
\end{definition}

\begin{definition}[OW-VRA Security \cite{KMY24}] 
\label{dfn:pke-ow-vra}
We say that a $\text{PKE-SKL}$ scheme $\text{PKESKL}$ is $\text{OW-VRA}$ secure if it satisfies the following requirement, formalized by the experiment $\text{\sf Exp}^{\text{ow-vra}}_{\text{PKESKL}, \mathcal{A}}(1^\lambda)$ between a QPT adversary $\mathcal{A}$ and the challenger:
\begin{enumerate} 
\item The challenger runs $(\text{\sf ek}, dk, \text{\sf dvk}) \leftarrow \mathcal{KG}(1^\lambda)$ and sends $\text{\sf ek}$ and $dk$ to $\mathcal{A}$. 
\item $\mathcal{A}$ sends $\text{\sf cert}$ to the challenger. If $\text{\sf DelVrfy}(\text{\sf dvk}, \text{\sf cert}) = \perp$, the experiment outputs $0$. Otherwise, the challenger chooses $m^* \leftarrow \mathcal{M}$, generates $\text{ct}^* \leftarrow \text{\sf Enc}(\text{\sf ek}, m^*)$, and sends $\text{\sf dvk}$ and $\text{ct}^*$ to $\mathcal{A}$.
\item $\mathcal{A}$ outputs $m'$. The challenger outputs $1$ if $m' = m^*$ and otherwise outputs $0$ as the final output of the experiment. 
\end{enumerate}
For any $\text{QPT}$ adversary $\mathcal{A}$, the advantage is negligible: $$ \text{Adv}^{\text{ow-vra}}_{\text{PKESKL}, \mathcal{A}}(\lambda) := \Pr \left[ \text{Exp}^{\text{ow-vra}}_{\text{PKESKL}, \mathcal{A}}(1^\lambda) = 1 \right] \leq \text{negl}(\lambda). $$ \end{definition}

We present a rigorous proof of \Cref{thm:ow-vra-test}.
\begin{proof}
To prove \Cref{thm:ow-vra-test}, we define a sequence of hybrid distributions as follows.

We define ${\sf Hyb}_0$ as follows:
\begin{enumerate}
    \item The challenger samples $x, \theta \in\{0, 1\}^{l_n}$ where $\theta$ has Hamming weight exactly $\epsilon l_n$ uniformly at random{.}
    \item The challenger generates classical PKE key pairs $(\text{\sf PKE.ek}_{i,b}, \text{\sf PKE.dk}_{i,b}) \leftarrow \text{PKE.\cal KG}(1^\lambda)$ for all $i \in [l_n]$ and $b \in \{0, 1\}$ .
    \item The challenger generates the quantum state $|\text{dk}_i\rangle$ for $i \in [l_n]$ based on $\theta[i]$ and $x[i]$ as follows:
    \begin{equation}
    \begin{aligned}
        &\ket{x[i]}\ket{\text{\sf PKE.dk}_{i,x[i]}}  \quad \text{if } \theta[i] = 0 \\
        & \frac{1}{\sqrt{2}} \left( \ket{0} \ket{{\sf PKE.dk}_{i,0}}  + (-1)^{x[i]} \ket{1} \ket{{\sf PKE.dk}_{i,1}} \right)  \quad \text{if } \theta[i] = 1.
    \end{aligned}
    \end{equation}
    The challenger sends the quantum decryption key $\text{dk} := |\text{dk}_1\rangle \otimes \cdots \otimes |\text{dk}_{l_n}\rangle$ to the adversary.
    \item Let $\textsf{ek} \coloneqq ({\sf PKE.ek}_{i,0}, {\sf PKE.ek}_{i,1})_{i \in [l_n]}$. The challenger sends $\textsf{ek}$ to $A$.
    \item $A$ sends a classical deletion certificate $\textsf{cert} = (e_i, d_i)_{i\in[l_n]}$ to the challenger.
    \item If the number of $i \in [l_n]$ that satisfies $\theta[i]=1$ and the condition below is geater or equal to $\delta l_{H}$ where $l_{H} = \epsilon l_{n}$:
    $$e_i \neq x[i] \oplus d_i \cdot (\text{\sf PKE.dk}_{i,0} \oplus \text{\sf PKE.dk}_{i,1})$$
    then the experiment outputs $0$ and aborts.
    \item The challenger chooses a message $m^* \leftarrow \mathcal{M}$. The challenger computes the encoded message $ch = C(m^*)$ using the error correcting code $C$. The challenger generates $\text{\sf PKE.ct}_{i,b} \leftarrow \text{\sf PKE.Enc}(\text{\sf PKE.ek}_{i,b}, ch[i])$ for each $i \in [l_n]$ and $b \in \{0, 1\}$. Let $\textsf{ct}^* := (\textsf{PKE.ct}_{i,b})_{i\in[l_n],b\in\{0,1\}}$. Let $\textsf{dvk}$ $\coloneqq $ ($\{x[i]\}_{i\in[l_n]:\theta[i]=1}$, $ \theta$, $\{\textsf{PKE.dk}_{i,0},\textsf{PKE.dk}_{i,1}\}_{i\in[l_n]:\theta[i]=1}$). The challenger sends $\textsf{dvk}$ and $\textsf{ct}^*$ to $A$.

    \item The adversary sends a guessed message $m'$ to the challenger. The challenger outputs $1$ if $m' = {\sf Dec}(ch)$ and otherwise outputs $0$ as the final output of the experiment. The reader may notice that it is equivalent for the adversary to guess $ch$ and {$m^\ast$} since $ch$ is the encoded {$m^\ast$}.
\end{enumerate}
The hybrid above is the same as the original security game of the PKE-SKL scheme.

\paragraph*{${\sf Hyb}_1$:} We define ${\sf Hyb}_1$ the same as ${\sf Hyb}_0$ except for:
\begin{itemize}
    \item The challenger computes $x_{|\cal \bar I}$ such that $x_{|\cal \bar I}[i] = x[i]$ for $\theta[i] = 0$ and $x_{|\cal \bar I}[i] = 0$ for $\theta[i] = 1$.
    \item The challenger computes the syndrome $s$ using the shortend code $C_\theta$. Then, it computes $y$ with $ x_{|\cal \bar I} \oplus s$. We abuse the notion of the syndrome $s$ to represent its error vector.
\end{itemize}

\begin{lemma}
    \label{lem:pke-hyb1-hyb0-negl}
    We have $\Pr[{\sf Hyb}_0 = 1] = \Pr[{\sf Hyb}_1 = 1] $.
\end{lemma}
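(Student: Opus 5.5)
Since ${\sf Hyb}_1$ changes nothing the adversary sees, I will argue that the lemma holds by observing that the two hybrids induce identical joint distributions over the adversary's view and the experiment's output. The new values $x_{|\bar{\cal I}}$ (zero-padded on the Hadamard positions), $s$, and $y = x_{|\bar{\cal I}} \oplus s$ are computed internally by the challenger. In the statement of ${\sf Hyb}_1$ they are not used to form any message to $A$: the key state $dk$, the encryption key ${\sf ek}$, the verification condition, ${\sf dvk}$, and the ciphertext ${\sf ct}^*$ (still an encryption of $ch = C(m^*)$) are all generated exactly as in ${\sf Hyb}_0$. The plan is therefore to step through ${\sf Hyb}_1$ line by line and check that each transmitted object is a function of the same randomness $(x,\theta,\{({\sf PKE.ek}_{i,b},{\sf PKE.dk}_{i,b})\}, m^*)$ as in ${\sf Hyb}_0$.

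The key steps are as follows.
\begin{enumerate}
\item Fix the randomness tape of the challenger and of $A$. Note that ${\sf Hyb}_1$ samples no additional randomness. The syndrome computation via the shortened code $C_\theta$ (\Cref{dfn:shortened-codes}) is a deterministic function of $x$ and $\theta$.
\item Verify that the state sent in Step 3, the key sent in Step 4, the abort condition of Step 6, and the pair $({\sf dvk},{\sf ct}^*)$ of Step 7 coincide syntactically in both hybrids. Hence $A$'s final output $m'$ has the same distribution.
\item Verify that the winning condition $m' = {\sf Dec}(ch)$ is unchanged. It follows that $\Pr[{\sf Hyb}_0=1]=\Pr[{\sf Hyb}_1=1]$ exactly, with no computational assumption needed.
\end{enumerate}

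The only subtlety I expect is confirming that $y$ is well defined. Concretely, $x_{|\bar{\cal I}}\oplus s$ should be a codeword of $C_\theta$, where $s$ is identified with the coset-leader error vector for its syndrome. If this fails, it would matter only in later hybrids where $y$ enters the ciphertext, not for this lemma. I will remark that since $y$ is merely bookkeeping for the subsequent hybrid (in which $ch$ is replaced by $ch \oplus y$, using linearity so that $ch\oplus y$ is again a codeword of $C_\lambda$), its definition cannot affect the present equality.
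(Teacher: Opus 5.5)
Your proposal is correct and matches the paper's proof: ${\sf Hyb}_1$ only adds the deterministic internal computation of $s$ and $y$, and these never enter any message sent to $A$ or the winning condition, so the two experiments are identical. You are also right that the well-definedness of $y$ as a codeword of $C_\theta$ matters only for the next hybrid, not for this equality.
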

\begin{proof}
    In ${\sf Hyb}_1$, the challenger computes two additional strings $y$ ans $s$. It does not affect the condition in the hybrids. Thus, it is obvious that $\Pr[{\sf Hyb}_0 = 1] = \Pr[{\sf Hyb}_1 = 1]$.

\end{proof}

\paragraph*{${\sf Hyb}_2$:} We define ${\sf Hyb}_2$ the same as ${\sf Hyb}_1$ except for:
\begin{itemize}
    \item Let $ch^\prime$ be defined as 
    \begin{equation}
        \begin{aligned}
            &ch^\prime[i] = ch[i] &\theta[i]=1 \\
            &ch^\prime[i] = ch[i] \oplus y[i] &\theta[i]=0. \\
        \end{aligned}
    \end{equation}
    \item The challenger computes ${\sf PKE.ct}_{i,b} \leftarrow {\sf PKE.Enc}({\sf PKE.ek}_{i,b}, ch^\prime[i])$.
    \item The challenger checks $m^\prime = {\sf Dec}(ch^\prime)$ in Step 8 instead. In the lemma below, we show that $ch^\prime$ is always a valid codeword.
\end{itemize}

\begin{lemma}
    \label{lem:pke-hyb2-hyb1-eq}
    We have $\Pr[{\sf Hyb}_2 = 1]=\Pr[{\sf Hyb}_1 = 1]$.
\end{lemma}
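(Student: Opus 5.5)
Our plan is to show that the change from ${\sf Hyb}_1$ to ${\sf Hyb}_2$ is a bijective relabeling of the challenge codeword: the ciphertext distribution and the acceptance condition in Step 8 are jointly identical in the two hybrids.

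First, we check that $ch^\prime$ is always a codeword of $C$. By construction, $y = x_{|\cal \bar I} \oplus s$, where $x_{|\cal \bar I}$ is padded with $0$ on the positions $\theta[i]=1$. Here $s$ denotes the coset-leader (error vector) associated with the syndrome of $x_{|\cal \bar I}$ under the shortened code $C_\theta$, and this error vector is supported on $\bar{\cal I}$. Hence $y$ has syndrome $0$ for $C_\theta$, so $y \in C_{|\theta}$. In particular $y[i]=0$ for all $i\in{\cal I}$, and by \Cref{dfn:shortened-codes} $y$ is also a codeword of the original code $C$. Since $y$ vanishes on ${\cal I}$, the string $ch^\prime$ defined piecewise in ${\sf Hyb}_2$ equals $ch \oplus y$. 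By linearity of $C$, $ch\oplus y = C(m^*) \oplus C(m_y) = C(m^* \oplus m_y)$, where $m_y$ is the message with $C(m_y)=y$. So $ch^\prime$ is a valid codeword and ${\sf Dec}(ch^\prime)=m^*\oplus m_y$ is well defined.

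Second, we argue that the joint distribution of everything the adversary sees is unchanged. The vector $y$ is determined by $(x,\theta)$, which is sampled before $m^*$ and independently of it. Conditioned on any fixed $(x,\theta)$, the map $m^* \mapsto m^*\oplus m_y$ is a bijection of $\mathcal M=\{0,1\}^{l_m}$. Since $m^*$ is uniform, $m^*\oplus m_y$ is uniform and independent of $(x,\theta)$, the keys, and the adversary's state before Step 7. Therefore the tuple consisting of the challenge message (now $m^*\oplus m_y$) and its encoding $ch^\prime$ has exactly the same distribution as $(m^*, ch)$ in ${\sf Hyb}_1$. The ciphertexts ${\sf PKE.ct}_{i,b}$ are produced by the same procedure from the encoded string, and ${\sf dvk}$ is unchanged. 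The winning check $m^\prime={\sf Dec}(ch^\prime)$ in ${\sf Hyb}_2$ corresponds exactly to $m^\prime={\sf Dec}(ch)$ in ${\sf Hyb}_1$ under this relabeling. Steps 1--6, including the abort condition, are untouched, so the two experiments output $1$ with the same probability.

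The main obstacle is the first step: making precise that $x_{|\cal \bar I}\oplus s$ lies in the shortened code, i.e.\ that the error vector representing the syndrome can be chosen supported on $\bar{\cal I}$ and consistent with the parity-check matrix of $C_\theta$ (the columns of $H$ indexed by $\bar{\cal I}$, as in \Cref{lem:syndrome-number}). We will fix a deterministic coset-leader map ${\sf Synd}\mapsto s$ on vectors supported on $\bar{\cal I}$, so that $H(x_{|\cal \bar I}\oplus s)=0$ holds by definition. Everything after that is the routine uniform-shift argument.
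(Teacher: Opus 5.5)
Your proposal is correct and follows the paper's proof: $y$ lies in the shortened code and hence in $C$, so linearity gives $ch^\prime = C(m^*\oplus m_y)$, and the uniform shift of $m^*$ leaves the output distribution unchanged. You are more careful than the paper on two points: you make explicit that the error vector must be supported on $\bar{\cal I}$, so that $y\in C_{|\theta}$ and $ch^\prime = ch\oplus y$, and you note that $m_y$ depends only on $(x,\theta)$ and is therefore independent of the fresh $m^*$.
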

\begin{proof}
    Let $m \in \{0,1\}^{l_m}$ be the message such that $y=C(m)$. Since $y$ is a codeword of the shortened code $C_\theta$, which is a subcode of the original code $C$. By the linearity of $C$,  we can see that $ch^\prime = C(m^* + m)$. Since $m^*$ is sampled uniformly at random, $m^* + m$ is also sampled from the uniform distribution. As a result, the output distribution does not change before and after.
\end{proof}

\paragraph*{${\sf Hyb}_3$:} We define ${\sf Hyb}_3$ the same as ${\sf Hyb}_2$ except for:
\begin{itemize}
    \item The challenger computes ${\sf PKE.ct}_{i,1-x[i]} \leftarrow {\sf PKE.Enc}({\sf PKE.ek}_{i,1-x[i]}, ch[i] \oplus (1-y[i]))$ for $i \in [l_n]\ s.t. \ \theta[i]=0$.
\end{itemize}

\begin{lemma}
    \label{lem:pke-hyb3-hyb2-negl}
    We have $ |\Pr[{\sf Hyb}_3=1] - \Pr[{\sf Hyb}_2=1] |= {\rm negl}(\lambda)$.
\end{lemma}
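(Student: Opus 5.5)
The plan is a standard reduction to the IND-CPA security of the single-bit PKE (Definition~\ref{dfn:IND-CPA-PKE}), combined with a hybrid argument over the indices in the computational basis. Let $\bar{\cal I} = \{i : \theta[i]=0\}$, with $|\bar{\cal I}| = l_n - \epsilon l_n$.

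${\sf Hyb}_2$ and ${\sf Hyb}_3$ differ only in the ciphertexts ${\sf PKE.ct}_{i,1-x[i]}$ for $i \in \bar{\cal I}$. For those indices, the encrypted bit changes from $ch[i]\oplus y[i]$ to $ch[i]\oplus(1-y[i])$, i.e.\ the plaintext is flipped. The key observation is that for $i\in\bar{\cal I}$ the leased state $\ket{dk_i}=\ket{x[i]}\ket{{\sf PKE.dk}_{i,x[i]}}$ contains no information about ${\sf PKE.dk}_{i,1-x[i]}$. Moreover, ${\sf dvk}$ only reveals ${\sf PKE.dk}_{i,b}$ for $\theta[i]=1$. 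Hence the secret key ${\sf PKE.dk}_{i,1-x[i]}$ is never used anywhere in ${\sf Hyb}_2$ or ${\sf Hyb}_3$, and in particular it is not used in the deletion check in Step 6, since that step involves only indices with $\theta[i]=1$.

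I will order $\bar{\cal I}=\{i_1,\dots,i_k\}$ and define intermediate hybrids ${\sf Hyb}_{2,j}$ for $j=0,\dots,k$. In ${\sf Hyb}_{2,j}$, the first $j$ such ciphertexts are generated as in ${\sf Hyb}_3$ and the rest as in ${\sf Hyb}_2$, so that ${\sf Hyb}_{2,0}={\sf Hyb}_2$ and ${\sf Hyb}_{2,k}={\sf Hyb}_3$. Given a distinguisher between ${\sf Hyb}_{2,j-1}$ and ${\sf Hyb}_{2,j}$, the IND-CPA reduction $B$ works as follows.
\begin{enumerate}
\item $B$ receives a challenge public key ${\sf pk}$ and embeds it as ${\sf PKE.ek}_{i_j,1-x[i_j]}$. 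This is possible because $B$ samples $x,\theta$ itself and can therefore place ${\sf pk}$ at the correct slot.
\item $B$ generates all other key pairs itself and prepares the leased key state honestly.
\item $B$ runs the deletion check using the keys it knows.
\item $B$ samples $m^*$ and computes $ch$, $s$, and $y$. These depend on $x_{|\bar{\cal I}}$, which $B$ knows because it is playing the challenger.
\item $B$ queries the IND-CPA challenger on the pair of bits $ch[i_j]\oplus y[i_j]$ and $ch[i_j]\oplus 1\oplus y[i_j]$.
\item $B$ inserts the returned challenge ciphertext at position $(i_j,1-x[i_j])$ and outputs $1$ iff the adversary's final guess satisfies $m'={\sf Dec}(ch')$.
\end{enumerate}

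The IND-CPA game as stated takes the plaintexts $0$ and $1$ in fixed order. If the bit $ch[i_j]\oplus y[i_j]$ equals $1$, $B$ therefore flips its output. Either way, $B$'s advantage equals the gap $|\Pr[{\sf Hyb}_{2,j-1}=1]-\Pr[{\sf Hyb}_{2,j}=1]|$.

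Summing over $j$ by the triangle inequality gives a total loss of at most $k\cdot{\rm negl}(\lambda)$. Since $k\le l_n={\rm poly}(\lambda)$, this is still negligible.

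The main obstacle is confirming that the reduction never needs ${\sf PKE.dk}_{i_j,1-x[i_j]}$. It is not used to prepare the leased state, since $\theta[i_j]=0$. It is not used in the verification or included in ${\sf dvk}$, since both involve only Hadamard-basis indices. The final check compares against $ch'$, which $B$ computes classically. The reduction also has to be QPT, i.e.\ $B$ must run the QPT lessee and prepare the quantum keys, and it does both efficiently.
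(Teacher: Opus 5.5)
Your proposal is correct and takes the same route as the paper, which only sketches the argument. The paper invokes IND-CPA security of the single-bit PKE, using the fact that for $\theta[i]=0$ the key ${\sf PKE.dk}_{i,1-x[i]}$ appears neither in the leased state, nor in ${\sf dvk}$, nor in the deletion check. Your per-index hybrid over $\bar{\cal I}$ (of fixed size $(1-\epsilon)l_n$) and the output-flipping step for the fixed-order IND-CPA game supply the details the paper leaves implicit, and both are sound: the flip works because $ch[i_j]\oplus y[i_j]$ is determined before the challenge ciphertext is inserted and has the same distribution in both neighbouring hybrids.
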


We can prove \Cref{lem:pke-hyb3-hyb2-negl} by the IND-CPA security of PKE as in \cite{KMY24}. The difference between ${\sf Hyb}_3$ and ${\sf Hyb}_2$ is the message encrypted into ${\sf PKE.ct}_{i,1 -x[i]}$. Since the lessee does not have the information about ${\sf PKE.dk}_{i,1 -x[i]}$ for $i$ such that $\theta[i] = 0$, it cannot distinguish the change in the ciphertexts. The readers may recall that the key state for such $i$ is $\ket{x[i]}\ket{{\sf PKE.dk}_{i, x[i]}}$ in which ${\sf PKE.dk}_{i,1 -x[i]}$ is absent. 

Below, we show an adversary for the certified deletion property that simulates ${\sf Hyb}_3$, thus proving that $\Pr[{\sf Hyb}_3 = 1] = {\rm negl}(\lambda)$. When the adversary receives a BB84 state $\ket{x}_\theta$ which is stored in register $({\sf A_1}, \dots, {\sf A}_{l_n})$:
\begin{enumerate}
\item The adversary generates classical PKE key pairs $(\text{\sf ek}_{i,b}, \text{\sf dk}_{i,b})$ for all $i \in [l_n]$ and $b \in \{0, 1\}$.
\item The adversary applies the following isometry ${\sf V}_i$ to ${\sf A}_i$ for every $i \in [l_n]$:
\begin{equation}
    V_i\ket{b}_{{\sf A}_i} = \ket{b}_{{\sf A}_i} \ket{{\sf dk}_{i,b}}_{{\sf DK}_i}.
\end{equation}
\item Let $\textsf{ek} \coloneqq ({\sf ek}_{i,0}, {\sf ek}_{i,1})_{i \in [l_n]}$. The adversary sends $\textsf{ek}$ and (${\sf A}_1$, ${\sf DK}_1$, $\dots$, ${\sf A}_{l_n}$, ${\sf DK}_{l_n}$) to the lessee.
\item The adversary receives a classical deletion certificate $\text{cert} = (e_i, d_i)_{i\in[l_n]}$ from the lessee. Then, the adversary computes $x^\prime[i] = e_i \oplus d_i \cdot ({\sf dk}_{i,0} \oplus {\sf dk}_{i,1})$.
\end{enumerate}
The adversary sends $x^\prime = x^\prime[1]||\dots||x^\prime[l_n]$ as its first answer.

When $x^\prime$ matches $x$ on every qubit in the Hadamard basis except for $\delta l_{H}$ qubits, the adversary receives $\theta, x_{|\cal I}, s \in \{0,1\}^{l_n}$ from the challenger. Then, the adversary queries the lessee to obtain the information encoded in the computational basis as follows:
\begin{enumerate}
    \item The adversary chooses a message $m^* \leftarrow \mathcal{M}$ and computes the encoded message $ch = C(m^*)$ using the error correcting code $C$. The adversary encrypts $ch[i]\oplus b \oplus s[i]$ to obtain $\text{\sf ct}_{i,b}$ for each $i \in [l_n]$ and $b \in \{0, 1\}$. 
    \item The adversary sends ($\{x[i]\}_{i\in[l_n]:\theta[i]=1}$, $\theta$, \{$\textsf{dk}_{i,0}$, $\textsf{dk}_{i,1}\}_{i\in[l_n]:\theta[i]=1}$) and $(\textsf{ct}_{i,b})_{i\in[l_n],b\in\{0,1\}}$ to the lessee.
    \item The adversary receives a message $m'$ from the lessee and computes $C(m^*) \oplus C(m^\prime)$. Then, combining with the syndrome $s$, the adversary is able to recover the information in the computational basis $x_{|\bar{\cal I}}$.
\end{enumerate}
    Below, we show that the adversary simulates the hybrid ${\sf Hyb}_3$ perfectly. 
    
    First, we show that the adversary produces a valid guess of the Hadamard basis. Whenever the lessee produces a valid certificate $(e_i, d_i)$, the adversary outputs a valid first-round answer (see \Cref{thm:CDP-BI20}) $x^\prime[i] = e_i \oplus d_i \cdot ({\sf PKE.dk}_{i,0} \oplus {\sf PKE.dk}_{i,1})$. The reader can check the fact by inspecting PKESKL.${\sf DelVrfy}$. The verification algorithm actually transforms the certificate into a guess of the information of the Hadamard basis, and then, it checks whether the guess is correct.
    
    Then, we show that the adversary simulates the challenge to the adversary in ${\sf Hyb}_3$ correctly. The challenge ciphertext is ${\sf PKE.ct}_{i, x[i]} \leftarrow {\sf PKE.Enc}({\sf PKE.ek}_{i, x[i]}, ch[i] \oplus y[i])$ and ${\sf PKE.ct}_{i, 1 - x[i]} \leftarrow {\sf PKE.Enc}({\sf PKE.ek}_{i, 1-x[i]}, ch[i] \oplus (1 - y[i]))$. We point out that $y = x \oplus s$, thus we can transform the ciphertext into the form as follows:
    \[
    \begin{aligned}
        & {\sf PKE.ct}_{i, x[i]} \leftarrow {\sf PKE.Enc}({\sf PKE.ek}_{i, x[i]}, ch[i] \oplus x[i] \oplus s[i] \\
        & {\sf PKE.ct}_{i, 1 - x[i]} \leftarrow {\sf PKE.Enc}({\sf PKE.ek}_{i, 1-x[i]}, ch[i] \oplus (1 - x[i]) \oplus s[i]).
    \end{aligned}
    \]
    If we replace $x[i]$ and $1 - x[i]$ with $b$, we can see that the challenge ciphertext is simulated by $\text{\sf PKE.ct}_{i,b} \leftarrow \text{\sf PKE.Enc}(\text{\sf PKE.ek}_{i,b}, ch[i]\oplus b \oplus s[i])$ which does not rely on the value of $x_{|\cal \bar I}$. Then, when the lessee produces a valid $m^\prime$, $y = C(m^*) \oplus C(m^\prime)$ satisfies $y \oplus s = x_{|\cal \bar  I}$
    for $i \in [l_n]$ such that $\theta[i] = 0$. Here, we also abuse the notation of the syndrome and its error vector.
    
    The adversary wins the security game with at least the same probability that the lessee
    wins. This can be seen from the fact that the adversary simulates ${\sf Hyb}_3$, such that it can guess $x_{|\cal \bar I}$ using the value of $m^\prime$ as shwon above.
    Thus, we conclude $\Pr[{\sf Hyb}_3 = 1] \leq \Pr[\text{A wins ${\sf CDBB84}_{C,\theta}(\delta, \lambda)$}] = {\rm negl}(\lambda)$, where the last equality comes from \Cref{thm:asym-threshold}.

    By combining \Cref{lem:pke-hyb1-hyb0-negl}, \Cref{lem:pke-hyb2-hyb1-eq}, and \Cref{lem:pke-hyb3-hyb2-negl}, we conclude the proof of \Cref{thm:ow-vra-test}.
\end{proof}

\end{document}